\documentclass[12pt]{article}
\usepackage{amssymb,amsmath,graphicx,stmaryrd,mathtools, amsthm, color, ragged2e}
\usepackage{algorithm,algpseudocode}
\usepackage{enumitem,subcaption}

\usepackage{authblk}
\usepackage[left=3cm, right=3cm]{geometry}
\usepackage{hyperref}

\newtheorem{theorem}{Theorem}
\newtheorem{lemma}[theorem]{Lemma}
\newtheorem{proposition}[theorem]{Proposition}

\newtheorem{definition}[theorem]{Definition}

\newcommand{\figcaption}[1]{\caption{\small #1}}
\theoremstyle{remark}
\newtheorem{remark}{Remark}


\newcommand{\ket}[1]{\left|#1\right\rangle}
\newcommand{\norm}[1]{\left\|#1\right\|}
\newcommand{\bra}[1]{\left\langle#1\right|}

\newcommand{\wsize}{{\rm{wsize}}}
\newcommand{\leaf}{{\rm{leaf}}}
\newcommand{\cT}{\mathcal T}

\newcommand{\cS}{\mathcal S}
\newcommand{\cG}{\mathcal G}

\newcommand{\cA}{\mathcal A}

\newcommand{\cV}{\mathcal V}
\newcommand{\cM}{\mathcal M}

\newcommand{\internal}{{\rm{in}}}
\newcommand{\pseudo}{{\rm{pseudo}}}
\newcommand{\black}{{\rm{black}}}
\newcommand{\red}{{\rm{red}}}

\newcommand{\Local}{{\rm{Local}}}

\newcommand{\BlackPath}{{\rm {BlackPath}}}

\newcommand{\PostProcess}{{\rm {PostProcess}}}
\newcommand{\free}{{\rm free}}
\newcommand{\forbid}{{\rm forbid}}

\title{Time and Query-Optimal Quantum Algorithms Based on Decision Trees}
\author{Salman Beigi} 
\author{Leila Taghavi}

\author{Artin Tajdini}

\affil{\it \small QuOne Lab, Phanous Research and Innovation Centre, Tehran, Iran}

\date{}

\begin{document}

\maketitle

\begin{abstract}
It has recently been shown that starting with a classical query algorithm (decision tree) and a guessing algorithm that tries to predict the query answers, we can design a quantum algorithm with query complexity $O(\sqrt{GT})$ where $T$ is the query complexity of the classical algorithm (depth of the decision tree) and $G$ is the maximum number of wrong answers by the guessing algorithm~\cite{LL16,BT20}. 
In this paper we show that, given some constraints on the classical algorithms, this quantum algorithm can be implemented in time  $\tilde O(\sqrt{GT})$. Our algorithm is based on non-binary span programs and their efficient implementation. We conclude that various graph-theoretic problems including bipartiteness, cycle detection and topological sort can be solved in time $O(n^{3/2}\log^2 n)$ and with $O(n^{3/2})$ quantum queries. Moreover, finding a maximal matching can be solved with $O(n^{3/2})$ quantum queries in time $O(n^{3/2}\log^2 n)$, and maximum bipartite matching can be solved in time $O(n^2\log^2 n)$.
\end{abstract}

\section{Introduction}

Quantum speedup can be obtained using classical query algorithms assuming that there is another classical algorithm that guesses the answers to the queries~\cite{LL16, BT20}. 
More precisely, assume that we have a classical query algorithm $\cA$, which can be thought of as a decision tree, that computes a function $f:D_f\to [m]$, where $D_f\subseteq [\ell]^n$, and we have a \emph{guessing algorithm} $\mathcal{G}$ that predicts the answers to the queries. Assume that the query complexity of $\cA$, i.e., height of the decision tree,  is $T$ and $\cG$ makes at most $G$ mistakes in its predictions. Then, there exists a quantum query algorithm for $f$ with complexity $O(\sqrt{GT})$.
This result has been used to bound the quantum query complexity of various functions, particularly for graph-theoretic problems~\cite{LL16, BT20, KT20}.
In this paper we are interested in the question of whether such a quantum algorithm can be implemented time-efficiently (given access to QRAM).  The informal statement of our main result is as follows.

\begin{theorem}\label{thm:informal} \emph{(Informal)}
Assume that we have a classical query algorithm (decision tree) that computes a function $f:D_f\to [m]$ with $D_f\subseteq [\ell]^n$ whose query complexity (depth) is $T$.  Also, assume that
we have a guessing algorithm that tries to predict the values of queried indices, making at most $G$ mistakes.
Furthermore, assume that three subroutines which we called Local, BlackPath, and PostProcess  related to both the classical query and guessing algorithms can be implemented in time $O(\text{\rm poly}(\log n))$.\footnote{We will explain more on these subroutines in Section~\ref{sec:TimeEfficient}.}   Then there exists a quantum query algorithm computing the function $f$ with  time complexity $\tilde O \big(\sqrt{GT}\big).$\footnote{$\tilde O$ hides 
poly-logarithmic factors in $n$.}
\end{theorem}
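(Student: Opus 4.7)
The strategy is to combine the query-optimal span-program construction of \cite{BT20} with a careful, local implementation of the reflections that appear in its phase-estimation-based evaluation. I would begin by recalling that the BT20 quantum algorithm comes from a non-binary span program $\mathcal{P}$ whose Hilbert space has a basis indexed by edges of the decision tree of $\cA$, with edges partitioned into \black{} edges (those consistent with the guesses made by $\cG$) and \red{} edges (inconsistent with the guesses). Positive and negative witnesses for inputs in $D_f$ are supported along transcripts whose weights combine to give a witness-size product $O(GT)$; evaluating $\mathcal{P}$ via phase estimation of the product of two reflections then reproduces the $O(\sqrt{GT})$ query bound.

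With the query bound taken from \cite{BT20}, the time-complexity question reduces to implementing, in $\tilde O(1)$ per invocation, the two reflections $R_H$ (about the \free{}-input subspace, determined by the oracle) and $R_K$ (about the span of the span-program vectors) used inside phase estimation. The reflection $R_H$ is essentially diagonal in the edge basis and costs only one oracle call together with classical bookkeeping. The more delicate $R_K$ I would decompose into local reflections indexed by internal nodes of the decision tree: at each node, the span-program vectors attached to its outgoing edges form a constant-sized block whose reflection can be applied in $\tilde O(1)$ time, provided that (i) the identity and labels of that block can be generated on the fly from the current basis element and (ii) long \black{} segments of the tree do not have to be traversed step by step. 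Requirement (i) is exactly what the \Local{} subroutine provides, while (ii) is the role of \BlackPath{}, which lets one jump across a whole monochromatic run in polylog time; at the end, \PostProcess{} extracts the value of $f$ from the superposition produced by phase estimation.

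Assembling these pieces and plugging into the standard span-program evaluation gives an $\tilde O(\sqrt{GT})$-time algorithm, assuming QRAM access to the data structures that the three subroutines maintain. The main obstacle, I expect, is showing that $R_K$ genuinely factors through constant-sized local reflections that are faithful to the global span program: one has to choose the weights on the span-program vectors, including those introduced to handle the non-binary alphabet $[\ell]$, so that the Gram matrix of the relevant vectors becomes block-diagonal after a change of basis that is itself implementable in polylog time from \Local{} and \BlackPath{}. Once this structural decomposition is in place, the rest of the argument reduces to the familiar Reichardt-style phase-estimation analysis, executed block by block and charged to the three subroutines plus one query per reflection.
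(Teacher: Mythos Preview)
Your high-level framing is right: the algorithm is phase estimation on a product of two reflections, $R_\Pi$ is easy, and the whole difficulty is implementing the reflection through $\ker(M)$ in polylog time. But your proposed mechanism for that reflection---``decompose into local reflections indexed by internal nodes, each a constant-sized block''---does not work, and this is exactly the obstacle the paper has to overcome by a different route.

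The kernel of $M$ is not block-diagonal over nodes. A typical kernel vector is supported on an entire \emph{black path} (a maximal run of black edges), not on the $O(1)$ edges incident to a single vertex; consecutive kernel vectors along a black path overlap, so no node-local decomposition exists. (The paper in fact notes that the natural ``local'' approach via the spectral gap of the associated Laplacian, as in the st-connectivity algorithm of Belovs--Reichardt, fails here because the spectral gap is forced to be tiny by the same cut weights that keep the negative witness size small.) What the paper does instead is: (i) modify the decision tree into a decision \emph{graph} $\tilde\cT$ by inserting auxiliary vertices, red pseudo-edges, and one returning black edge per black path, turning each black path into a black \emph{cycle}; (ii) explicitly write down a basis for $\ker(\tilde M)$ consisting of orthogonal ``Type~I'' vectors (one per root/leaf) and, for each black cycle, a family of ``Type~II'' vectors whose overlap pattern is \emph{circulant}; (iii) diagonalize each circulant block by the discrete Fourier transform of order equal to the cycle length. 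The $\BlackPath$ subroutine is used not to ``skip traversal'' in a vague sense but concretely to compute, for the current edge, the length of its black cycle and its index inside that cycle, so that the correct Fourier transform can be applied in superposition.

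So the missing ingredients in your proposal are the tree modification (pseudo-edges and the returning black edge), the explicit kernel basis, and the Fourier transform along black cycles; without them the reflection through $\ker(M)$ cannot be reduced to polylog-time operations. Once you have those, the rest of your outline (phase estimation, $R_\Pi$ via two queries, $\PostProcess$ at the end, QRAM for the data structures) matches the paper.
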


The proof of the bound of $O\big(\sqrt{GT}\big)$ on the quantum query complexity in~\cite{BT20} is based on non-binary span programs (and the dual adversary bound). 
Span program is a powerful tool for designing quantum query algorithms~\cite{Rei09}. This tool has been generalized for functions with non-binary input alphabet~\cite{ItoJeffery15} and non-binary output~\cite{BT19}. In particular, it is known that
non-binary span programs  characterize the quantum query complexity of any function (with non-binary input  and/or non-binary output alphabets) up to a constant factor~\cite{BT19}.

Our approach to prove Theorem~\ref{thm:informal} is to implement the span-program-based algorithm of~\cite{BT20} time-efficiently.
This algorithm consists of repeated applications of two unitaries, one of which is input-dependent and the other one is the reflection through the kernel of a matrix $M$ which captures the structure of the span program (see the statement of Theorem~\ref{thm:QQAlg} for the  definition of the Matrix $M$.). Then to implement the algorithm time-efficiently, we need to implement each of these unitaries in time $O(\text{poly}\log(n))$. The former input-dependent unitary, as usual, has a simple structure and is easily implemented using two quantum queries. Then the proof of Theorem~\ref{thm:informal} reduces to the implementation of the second reflection in time $O(\text{poly}\log(n))$.

Time-efficient implementation of a span program-based quantum algorithm was first studied in~\cite{BR12} for the problem of \emph{st-connectivity} in graphs. The authors in~\cite{BR12} use the framework of quantum walks to implement the second reflection mentioned above. The point is that after a slight modification of the span program and the associated matrix $M$, the singular-value gap of $M$ around its null-space (which can also be understood as the eigenvalue gap of the normalized Laplacian of the underlying graph) becomes large. This means that, a short-time quantum walk on a bipartite graph associated with $M$ can distinguish the null-space of $M$ from its complementary subspace. Then, this quantum walk is used in~\cite{BR12} for time-efficient  implementation of the reflection through $\ker(M)$ for the st-connectivity problem.

As also mentioned in~\cite{KT20}, the first idea for time-efficient implementation of the reflection through $\ker(M)$ in our problem, is to use the quantum walk of~\cite{BR12}. To this end, we need to verify that the spectral gap of the normalized Laplacian of the starting decision tree (with certain weights on its edges) is large. However, we observe that this spectral gap is indeed very small. In fact, this spectral gap is small due to the fundamental reason that the spectral gap of the Laplacian is upper bounded by the weights of the cuts in the graph, and the weights of certain cuts in our decision tree have to remain small since they are related to a parameter of the span program called \emph{negative complexity}. Indeed, in our problem there is a trade-off between the spectral gap of the Laplacian and the complexity of the span program. This trade-off persists even after modifications of the decision tree and the span program. Therefore, we cannot adopt the approach of~\cite{BR12} for our problem and need new ideas.

\subsection{Our proof strategy} As mentioned above,
the bound on the query complexity in Theorem~\ref{thm:informal} in~\cite{BT20} in the case of binary input alphabet ($\ell =2$), is based on span programs.
Our first step to prove Theorem~\ref{thm:informal} is to generalize the construction of~\cite{BT20} and convert a decision tree along with a guessing algorithm into a non-binary span program with complexity $O(\sqrt{\log(\ell)GT})$. 
We note that this bound is an improvement over the immediate bound of $O(\log(\ell)\sqrt{GT})$ that is obtained by turning an arbitrary decision tree to a binary one while blowing its depth by a $\log(\ell)$ factor. 

Our second step is to \emph{compile} a wide class of non-binary span programs including the above ones, into a quantum algorithm with the same query complexity.\footnote{In~\cite{BT19} the fact that non-binary span programs characterize quantum query complexity is proved using the dual adversary bound, and not by directly compiling to a quantum algorithm.}
This algorithm, similar to other span program based algorithms, consists of a sequence of alternative applications of two reflections. One of these reflections is a reflection through the indices of the \emph{available input vectors}. The other one is a reflection through the kernel of a matrix $M$, which is a matrix that encapsulates the structure of the span program.

The third step of the proof is to implement the above algorithm time-efficiently. That is, to implement the aforementioned reflections time-efficiently. The first reflection is easily implementable using two quantum queries. Next, to implement reflection through $\ker(M)$, we need to take a detour. We first, by making some modifications on the starting decision tree, adjust the span program as well as the matrix $M$ so that its kernel takes a tractable structure. We \emph{explicitly compute} a basis for $\ker(M)$, and  argue that the reflection through $\ker(M)$ can be implemented in time $O(\log n)$.

Let us give some details about the structure of $\ker(M)$. Following~\cite{BT20} we think of the guessing algorithm as a coloring of edges of the decision tree by two colors: black and red. For any vertex of the decision tree, its outgoing edges correspond to outputs of the query at that vertex. We color the edge associated to the output of the guessing algorithm in black, and the other edge in red. Therefore, visiting a red edge means the guessing algorithm made a mistake. Next, the decision tree is adjusted in such a way that $\ker(M)$ consists of a vector for each edge. Vectors associated to red edges become orthogonal to all other vectors, while vectors associated to black edges have a non-trivial overlap with those of neighboring black edges. Thus, any \emph{black path} in the decision tree gives rise to a sequence of vectors in the kernel any two consecutive of which overlap. Next, we observe that by adding an extra black edge in the modified decision graph, and turning a black path to a \emph{black cycle}, we can convert those vectors to orthogonal ones via \emph{Fourier transform}. 

Theorem~\ref{thm:informal} starts with two classical algorithms, namely the classical query and the guessing algorithms. We need these classical algorithms to satisfy some properties in order to obtain Theorem~\ref{thm:informal}. One immediate property is that given a vertex of the decision tree, we need to be able to locally construct the decision tree together with the coloring of the edges around that vertex at low cost. A less trivial property imposed by the above discussion about the structure of $\ker(M)$, is that we need to be able to \emph{traverse along black paths} at low cost. This property is sometimes highly non-trivial and should be verified in applications. Moreover, the classical query algorithm requires some post-processing in order to compute the output after making all necessary queries. We need the post-processing of the classical query algorithm to have a low cost comparing to $O(\sqrt{GT})$. Finally, to implement our quantum algorithm in Theorem 1 we need access to QRAM. We give more details on these issues and assumptions in Section~\ref{sec:TimeEfficient}, where we also give the formal statement of our result.

\subsection{Applications}
Theorem~\ref{thm:informal} has several applications. For example, we can easily show that given query access to a list $x=(x_1,x_2,\ldots ,x_n)\in [\ell]^n$, the time complexity of counting all input indices with value equal to a fixed $q\in[\ell]$ in $x$ is $O(\sqrt{rn\log \ell}\log^2 n)$, where $r$ is the number of such indices. Also, given query access to $x\in\{0,1\}^n$ with the promise that its hamming weight is at most $h$, we can identify it in time $O(\sqrt{hn}\log^2 n)$.
More important applications of Theorem~\ref{thm:informal} are in proving upper bounds on quantum time complexity of various graph-theoretic problems. 

\begin{proposition}\label{pro:BFS-based}
Suppose that we have query access to the adjacency matrix of a simple\footnote{We can derive the same results for non-simple graphs by making minor modifications in the proofs.} (possibly directed) graph $\cG$ on $n$ vertices. Then, the time complexity of the following problems is $O(n^{3/2}\log^2 n)$ and their query complexity is $O(n^{3/2})$:
\begin{enumerate}
\item[\rm{(i)}] \textsc{[bipartiteness]}\label{ex:bipartiteness} Decide whether $\cG$ is bipartite or not.
\item[\rm{(ii)}] \textsc{[cycle detection]}\label{ex:cycle} Decide whether $\cG$ is a forest or has a cycle.

\item[\rm{(iii)}] \textsc{[directed st-connectivity]}\label{ex:dir-st-con} Find a shortest path (the path that consists of the least number of edges) between two vertices $s$ and $t$ in a directed or undirected graph $\cG$.  

\item[\rm{(iv)}]  \textsc{[smallest cycles containing a vertex]} 
Find the length of the smallest \emph{directed} cycle containing a given vertex $v$ in a directed graph $\cG$.
\end{enumerate}
Also the following holds:
\begin{enumerate}
\item[\rm{(v)}]  \textsc{[$k$-cycle containing a vertex]}  There is a quantum algorithm for deciding whether $\cG$ has a cycle of length $k$, for a fixed $k$, containing a given vertex $v$ that is run in time $O((2k)^{(k-1)}n^{3/2}\log^2 n)$ and makes $O((2k)^{(k-1)}n^{3/2})$ queries.

\end{enumerate}
\end{proposition}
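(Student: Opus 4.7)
The plan is, for each of (i)--(v), to exhibit a classical decision-tree algorithm together with a guessing algorithm satisfying the hypotheses of Theorem~\ref{thm:informal} with $T=O(n^{2})$ and $G=O(n)$ for (i)--(iv) (and with $T$ additionally inflated by $(2k)^{2(k-1)}$ in case~(v)), and then to verify that the three subroutines $\Local$, $\BlackPath$, and $\PostProcess$ can each be implemented in $O(\mathrm{polylog}\,n)$ time under QRAM. For (i)--(iv), the common classical template I will use is a \emph{sparse BFS}: when a frontier vertex $u$ is processed, the algorithm sweeps over candidate unvisited targets $w$ in a fixed canonical order and queries the adjacency bit $(u,w)$; if the answer is ``yes'' the algorithm marks $w$ as visited, records a tree edge $u\to w$, and enqueues $w$; otherwise the sweep simply advances. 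The guessing algorithm predicts ``no edge'' for every query. Since a ``yes'' answer is issued only when a \emph{new} vertex is discovered, any execution of this BFS yields at most $n-1$ ``yes''s, while the total number of queries is at most $\binom{n}{2}=O(n^{2})$.

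Each of (i)--(iv) then appends a short problem-specific verification: for (i) a sweep over pairs of same-level vertices to detect an odd-parity edge; for (ii) a sweep over non-tree pairs within the same BFS tree; for (iii) simply reading the parent pointer of $t$ recorded during BFS; and for (iv) a BFS from $v$ in the reversed graph followed by a scan of the out-neighbours of $v$ in the original graph, returning $1+\min_{u\colon (v,u)\in E} d(u,v)$. Every verification sweep has $O(n^{2})$ queries but contributes at most one additional ``yes'' mistake, because the search can terminate at its first witness. Hence $G=O(n)$ and $T=O(n^{2})$ throughout, and Theorem~\ref{thm:informal} delivers $\tilde O(n^{3/2})$ time and $O(n^{3/2})$ queries. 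For (v), I would build a branching decision tree that recursively guesses the next vertex on a candidate $k$-cycle through $v$; a standard colour-coding / perfect-hash-family argument shows it suffices to branch over at most $2k$ labels at each of the $k-1$ intermediate positions, which inflates $T$ by $(2k)^{2(k-1)}$ while leaving $G=O(n)$, so $\sqrt{GT}$ yields the stated $(2k)^{k-1}n^{3/2}$ factor.

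The subroutines $\Local$ and $\PostProcess$ will be straightforward once the BFS state (visited bit-array, parent pointers, levels, sweep index, phase flag) is stored in $\tilde O(n)$ QRAM cells: $\Local$ returns the next query index and the ``no'' guess in $O(\log n)$ time, and $\PostProcess$ reads off the leaf label (bipartiteness bit, detected back edge, shortest-path parent chain, or minimum cycle length) with the same cost. The delicate ingredient is $\BlackPath$, which must fast-forward from an internal decision-tree node through the entire maximal run of ``no'' answers without simulating it step by step. My plan is to exploit the fact that along such a black stretch the visited set does not change, so the state evolves purely by incrementing the sweep index and popping the queue; the next change point is therefore one of (a) exhaustion of the current sweep over unvisited targets, (b) exhaustion of the queue (triggering a phase transition to the verification stage), or (c) the first query whose true answer would be ``yes''. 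Each of these can be computed from the stored tuple in $O(\log n)$ arithmetic operations under QRAM.

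The main obstacle in this plan will be precisely this $\BlackPath$ implementation: a maximal black stretch can have length $\Omega(n^{2})$ on a sparse graph, so a naive step-by-step walk is ruled out, and one must check that the compact encoding of decision-tree nodes is rich enough to predict in closed form where the stretch terminates. Once this is in place, a uniform application of Theorem~\ref{thm:informal} to each of the five algorithms above yields the claimed $O(n^{3/2}\log^{2} n)$ time and $O(n^{3/2})$ query bounds, with the additional $(2k)^{k-1}$ factor for~(v).
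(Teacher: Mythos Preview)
Your overall strategy coincides with the paper's: apply Theorem~\ref{thm:binaryClassical2quantumtime} to a BFS-based decision tree with the ``no edge'' guess, note $T=O(n^{2})$ and $G=O(n)$, and check that the three subroutines cost $O(\mathrm{polylog}\,n)$. The paper does exactly this, deferring the specific classical algorithms to~\cite{BT20} and spelling out only bipartiteness (via an auxiliary height-parity array) in detail.

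There is, however, a genuine slip in your treatment of $\cA_{\BlackPath}$. The black path containing a decision-tree vertex is a \emph{structural} feature of $\cT$: it is the maximal chain of black edges through that vertex, fixed by the colouring alone and independent of the input $x$; its two ends are a red-edge child (or the root) on one side and a leaf on the other. Your item~(c), ``the first query whose true answer would be `yes','' is input-dependent---it is where a particular \emph{execution} may branch off the black path, not where the path terminates in the tree. This matters because $\cA_{\BlackPath}$ is invoked coherently, over a superposition of decision-tree vertices, inside the input-\emph{independent} reflection $R_{\Lambda}=2\Lambda-I$ through $\ker(\tilde M)$; it must not touch the oracle. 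Once you drop~(c), what remains is essentially the paper's argument: along a black path the lists $L,Q,E_{\cS}$ are frozen and only the inner-loop index and the queue-front pointer advance, so the length of the path and its $k$-th vertex are recovered by $O(\log n)$ pointer arithmetic. Note too that your item~(a) is merely an intermediate pointer update, not a termination event---the black path continues across the dequeue into the next sweep and ultimately reaches a leaf.
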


The bounds on the quantum query complexities in this proposition have been proved in Proposition 9 of~\cite{BT20}. To prove this proposition using  Theorem~\ref{thm:informal}, it is enough to show that the classical time complexities of the aforementioned  operations on the associated decision tree (local construction of the decision tree, and traversing the black paths) are at most $O(\log n)$. We note that the classical algorithm behind all of the above results is  the BFS algorithm, for which we show that the costs of the above operations are $O(\log n)$. 
We give more details on this in Section~\ref{sec:applications-bfs}.

\begin{proposition}\label{pro:DFS-based}
Suppose that we have query access to the adjacency matrix of a directed graph $\cG$ on $n$ vertices. Then the time complexity of the following problems is $O(n^{3/2}\log^2 n)$ and their query complexity is $O(n^{3/2})$:
\begin{enumerate}
\item[\rm{(i)}] \textsc{[topological sort]}\label{ex:topsort} Assuming that $\cG$ is acyclic, find a vertex ordering of $\cG$ such that for all directed edges $(u,v)$, $u$ appears before $v$.

\item[\rm{(ii)}] \textsc{[connected components]}\label{ex:con-com} 
Find all connected components of $\cG$.

\item[\rm{(iii)}] \textsc{[strongly connected components]}\label{ex:str-con-com}
Find strongly connected components of $\cG$.
Recall that two vertices $u, v$ belong to the same strongly connected component iff there exists a directed path from $u$ to $v$ and a directed path from $v$ to $u$ in $\cG$.
\end{enumerate}
\end{proposition}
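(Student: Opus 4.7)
The plan is to apply Theorem~\ref{thm:informal} with depth-first search (DFS) as the underlying classical query algorithm, in close analogy with the BFS-based proof of Proposition~\ref{pro:BFS-based}. The $O(n^{3/2})$ bounds on the quantum query complexities of the three problems are already established in~\cite{BT20}, where for each item a DFS-based decision tree of depth $T=O(n^2)$ on the adjacency-matrix indices is exhibited together with a guessing algorithm for which $G=O(n)$. My task therefore reduces to verifying that the three subroutines $\Local$, $\BlackPath$, and $\PostProcess$ required by Theorem~\ref{thm:informal} all run in $O(\text{poly}\log n)$ classical time given QRAM.

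For $\Local$, this is immediate from the iterative description of DFS: the decision-tree state at any vertex can be represented by a constant number of integers, namely the current vertex being explored, the adjacency-row pointer it is scanning, and the DFS call stack (recoverable from the path data in the decision tree). Given such a state together with the answer to the current query, one either advances the scan pointer, pushes a newly discovered vertex, or pops back to the parent, each of which is an $O(\log n)$ operation with QRAM-indexed auxiliary arrays.

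For $\PostProcess$, each item has a standard linear-time classical postprocessing that immediately fits into the framework: topological sort outputs the reverse of DFS finishing times; connected components labels every vertex by the root of its DFS tree; and strongly connected components uses either Tarjan's low-link bookkeeping on a single DFS pass or Kosaraju's algorithm on two concatenated DFS passes (the second on the transposed graph, which requires only reinterpreting adjacency queries and does not inflate $T$ or $G$ asymptotically). In each case the output at a given index can be extracted from the DFS metadata stored in QRAM with $O(\log n)$ work per index.

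The main obstacle, and the step that requires the most care, is $\BlackPath$. Since a black edge of the decision tree corresponds to a correctly guessed query answer, a maximal black path is a contiguous segment of DFS transitions whose answers the guesser predicted correctly, and the subroutine must return the endpoint of this segment without iterating step by step. Under the guessing convention that yields $G=O(n)$ (the guesser predicts that a queried adjacency entry is absent unless the resulting edge would be on the already-discovered DFS tree), the length and endpoint of a black segment are determined entirely by structural data of the decision tree rather than by input queries: the next position at which a guess would be wrong can be computed from the stored state (current vertex, scan pointer, stack top) via a constant number of QRAM lookups and comparisons, hence in $O(\log n)$ time. Once this subroutine is in place, Theorem~\ref{thm:informal} yields time complexity $\tilde O(\sqrt{GT})=O(n^{3/2}\log^2 n)$ for each item, as claimed.
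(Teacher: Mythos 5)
Your proposal is correct and follows essentially the same route as the paper: replace BFS by DFS (a stack in place of a queue), reuse the pointer-plus-array state encoding so that $\cA_{\Local}$ and $\cA_{\BlackPath}$ run in $O(\log n)$ time, and note that post-processing is cheap; the paper's own proof is just a two-sentence reduction to Proposition~\ref{pro:BFS-based}, so your write-up is in fact more detailed (e.g., making explicit that the transpose-graph pass in Kosaraju's algorithm only reinterprets adjacency-matrix queries and leaves $T,G$ unchanged).
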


Again, the bounds on the quantum query complexities in this proposition have been proved in~\cite{BT20}.
Proof of this proposition is very similar to that of Proposition~\ref{pro:BFS-based}. The main difference is that the classical algorithm for these problems in~\cite{BT20} are based on the DFS algorithm.  The DFS algorithm is similar to the BFS algorithm; the only difference is that instead of a queue that is a first-in first-out list, DFS algorithm uses a stack which is a last-in first-out list. Thus, its time-efficient implementation is feasible using the same ideas that we use for the BFS algorithm.

\begin{table}[]
\begin{center}
{\small
\begin{tabular}{|l|l|l|l|} 
\hline
Problem
                      &  matrix model &  list model \\ \hline
\textsc{bipartiteness} & $O(n^{3/2}\log^2 n)^*$ & $O\big(\sqrt{(m+n)n}\log^{5/2}(n)\big)$ \\
\textsc{topological sort}  & $O(n^{3/2} \log^2 n)$ & $O\big(\sqrt{(m+n)n}\log^{5/2}(n)\big)$ \\
\textsc{directed st-connectivity} & $O(n^{3/2} \log^2 n)$ & $O\big(\sqrt{(m+n)n}\log^{5/2}(n)\big)$ \\ 
\textsc{connected components}   & $O(n^{3/2} \log^2 n)$ & $O\big(\sqrt{(m+n)n}\log^{5/2}(n)\big)$ \\
\textsc{strongly connected components} & $O(n^{3/2}\log^2 n)$ & $-$ \\
\textsc{cycle detection} & $O(n^{3/2}\log^2 n)^*$ & $-$\\
\textsc{SCCV} & $O(n^{3/2}\log^2 n)$ & $-$\\
\textsc{$k$-cycle containing a vertex} &  $O((2k)^{(k - 1)} n^{3/2} \log^2 n)$ &  $-$\\
\textsc{maximum bipartite matching} & $O(n^2 \log^2 n)^*$ & $O\big(n\sqrt{m+n}\log^{5/2} n\big)^*$ \\
\textsc{maximal matching} & $O(n^{3/2} \log^2 n)^*$ & $O\big(\sqrt{(m+n)n}\log^{5/2} n\big)^*$ \\
\hline
\end{tabular}
}
\end{center}
\caption{\it \small Summary of our results on the \emph{time} complexity of graph-theoretic problems. Here, \textsc{SCCV} stands for Smallest Cycles Containing a Vertex. We are giving algorithms for \textsc{SCCV} and $k$-\textsc{cycle containing a vertex} here for the first time. Moreover, our algorithm for \textsc{Directed st-connectivity} finds an st-path while the algorithm of~\cite{BR12} only \emph{detects} the presence of an st-path. Bounds labeled by $*$ have been studied in previous works, but their exact poly-logarithmic factors are not explicitly given, so we cannot compare their results with ours (see Subsection~\ref{ssec:related} for more details). 
}
\label{table}
\end{table}

\begin{proposition}\label{pro:BFS-list-based} 
Suppose that we have query access to the adjacency list of graph $\cG$ with $n$ vertices and $m$ edges. Then the time complexity of the following problems is $O\big(\log^{5/2}(n)\sqrt{(m+n)n}\big)$.
\begin{enumerate}
\item[\rm{(i)}] \textsc{[directed st-connectivity]}\label{ex:dir-st-con-list} Find a shortest (directed or undirected) path between two vertices $s, t$ in $\cG$.

\item[\rm{(ii)}] \textsc{[bipartiteness]}\label{ex:bipartiteness-list} Decide whether $\cG$ is bipartite or not.

\item[\rm{(iii)}] \textsc{[topological sort]}\label{ex:topsort-list} Assuming that $\cG$ is acyclic, find a vertex ordering of $\cG$ such that for all $(u,v)\in E$, $u$ appears before $v$.

\item[\rm{(iv)}] \textsc{[connected components]}\label{ex:con-com-list} 
Determine all connected components of $\cG$.
 
\end{enumerate}
\end{proposition}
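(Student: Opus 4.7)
The plan is to apply Theorem~\ref{thm:informal} to a BFS-based (or DFS-based, for topological sort and connected components) classical decision tree in the adjacency list model. In this model each query reveals the identifier of a specific neighbor of a given vertex, so the input alphabet has size $\ell = O(n)$. Classical BFS processes each edge at most twice and each vertex once, giving query complexity $T = O(m+n)$. For the guessing algorithm, I would guess that every neighbor returned has already been encountered (either visited or enqueued) earlier in the traversal. Under this prediction a mistake occurs exactly when a new vertex is discovered, and in BFS these discoveries correspond to the edges of the BFS tree, of which there are at most $n-1$. Hence $G = O(n)$. The non-binary span program bound of $O(\sqrt{\log(\ell)\, GT})$ from the first step of our compilation therefore evaluates to $O(\sqrt{n(m+n)\log n})$.

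To turn this query bound into a time bound I would verify that the subroutines $\Local$, $\BlackPath$, and $\PostProcess$ required by Theorem~\ref{thm:informal} can each be implemented in $O(\mathrm{polylog}(n))$ time, following the pattern of Proposition~\ref{pro:BFS-based}. The $\Local$ subroutine simulates a single step of BFS given the current partial state (the queue contents together with the set of already-visited vertices); maintaining these in QRAM-backed balanced search trees allows updates and ``is visited?'' tests in $O(\log n)$ time. The $\PostProcess$ routine is problem-specific but straightforward in all four cases: bipartiteness reads off parity labels, shortest path follows predecessor pointers, topological sort reports a DFS finishing order, and connected components labels each vertex by its tree root.

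The main obstacle is $\BlackPath$. A black edge in the decision tree corresponds to a query whose answer matches the prediction ``already seen'', so a long black path corresponds to a long run of consecutive already-seen entries in the adjacency list of the currently processed vertex. Walking this path one edge at a time would cost $\Omega(\deg(v))$, far beyond the $O(\mathrm{polylog}(n))$ budget. To skip past such runs in $O(\log n)$, I would maintain for every vertex $v$ an index structure over the positions in its adjacency list at which BFS would discover a previously unseen vertex; a predecessor query in this structure returns the next ``red'' position along any black path. The structure can be updated incrementally as BFS progresses and stored in QRAM so that the quantum algorithm can access it coherently. For the DFS-based problems (topological sort and connected components) the same data structure works after exchanging the BFS queue for a DFS stack, as already observed following Proposition~\ref{pro:DFS-based}.

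Putting these ingredients into Theorem~\ref{thm:informal} yields total time $\tilde O(\sqrt{GT\log\ell}) = O(\sqrt{n(m+n)}\,\log^{5/2} n)$, matching the statement of Proposition~\ref{pro:BFS-list-based}. The extra $\sqrt{\log n}$ factor compared with the matrix-model bound of Proposition~\ref{pro:BFS-based} is exactly the $\sqrt{\log \ell}$ contribution from the non-binary input alphabet of size $\ell = n$.
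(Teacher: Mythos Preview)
Your overall strategy---apply Theorem~\ref{thm:binaryClassical2quantumtime} to the BFS/DFS decision tree in the list model with $T=O(m+n)$, $G=O(n)$, alphabet size $\ell=O(n)$, and then verify that the three subroutines run in $O(\mathrm{polylog}\,n)$---is exactly what the paper does. The paper's own proof of Proposition~\ref{pro:BFS-list-based} is a single sentence (``the same arguments as for Proposition~\ref{pro:BFS-based}''), so you have correctly reconstructed the approach and correctly identified the extra $\sqrt{\log n}$ factor as the $\sqrt{\log\ell}$ penalty coming from the non-binary alphabet.

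There is, however, a genuine conceptual slip in your $\BlackPath$ paragraph. You propose storing, for each graph vertex $v$, the positions in its adjacency list at which BFS discovers a previously unseen vertex, and then issuing predecessor queries to locate ``the next red position.'' But those positions are \emph{input-dependent}: they are where red edges are actually taken for the particular graph $\cG$. The black path, by contrast, is a fixed structural feature of the decision tree and its $G$-coloring, independent of the input; it is obtained by always following the \emph{predicted} answer and runs from a red-edge child (or the root) all the way down to a \emph{leaf}, not to the next input-specific red edge. The subroutine $\cA_{\BlackPath}$ must return the length of this path and its $k$-th vertex without making any queries, so your red-position index is neither computable at that point nor relevant to what is being asked.

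The fix mirrors the matrix-model argument exactly as the paper intends. Along a black path the lists $L$, $Q$, $E_{\cS}$ never change; only the pointers (current position within the active adjacency list, and head of the queue) advance. Hence the $k$-th vertex on the black path is recovered by pointer arithmetic. The only new wrinkle in the list model is that crossing from one adjacency list to the next requires the degree of the current vertex; since degrees are part of the problem specification in this model, one augments the encoded state with prefix sums of the degrees of the vertices currently in $Q$, updated in $O(\log n)$ on each enqueue/dequeue. Both the black-path length and the $k$-th state then follow in $O(\log n)$ from a single lookup.
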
 

The proof of this proposition, again based on the quantum query algorithms of~\cite{BT20}, is similar to those of Proposition~\ref{pro:BFS-based} and Proposition~\ref{pro:DFS-based}.

\begin{proposition}[\textsc{maximum bipartite matching}]\label{pro:bipartite} Assume that $\cG$ is unweighted and bipartite. Then there is a quantum algorithm for finding a maximum bipartite matching in $\cG$ that runs in time $O(n^2\log^2 n)$ and makes $O(n^2)$ queries in the adjacency matrix model. For the adjacency list model the time complexity is $O\big(n\sqrt{m+n}\log^{5/2} n\big)$.
\end{proposition}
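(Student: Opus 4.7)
The plan is to use the Hopcroft--Karp algorithm as the underlying classical subroutine and apply the time-efficient compilation of Theorem~\ref{thm:informal} to each of its phases. Recall that Hopcroft--Karp maintains a current matching $M$ and, in each phase, augments $M$ along a \emph{maximal} set of vertex-disjoint shortest $M$-augmenting paths; it is classical that $O(\sqrt{n})$ phases suffice to reach a maximum matching in a bipartite graph. Each phase itself is a layered BFS (computing distances from the unmatched left vertices to the unmatched right vertices in the alternating-path metric) followed by a DFS that greedily peels off vertex-disjoint augmenting paths from that layered subgraph.

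First, I would cast a single phase as an instance of our decision-tree-plus-guesser framework. The BFS half is directly an instance of the BFS decision tree underlying Proposition~\ref{pro:BFS-based} (resp.\ Proposition~\ref{pro:BFS-list-based}), once we reinterpret ``edge'' to mean ``alternating edge'' given the current matching stored in QRAM. The DFS half fits into the DFS framework of Proposition~\ref{pro:DFS-based}. In both cases, the classical decision tree has depth $T=O(n^2)$ in the matrix model and $T=O(m+n)$ in the list model, and the natural all-zeros guessing algorithm incurs $G=O(n)$ mistakes per phase, so Theorem~\ref{thm:informal} would give quantum time $O(n^{3/2}\log^{2} n)$ per phase in the matrix model and $O(\sqrt{(m+n)n}\log^{5/2} n)$ per phase in the list model.

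The main obstacle is verifying that the three subroutines $\Local$, $\BlackPath$, and $\PostProcess$ for a single Hopcroft--Karp phase can still be implemented in $O(\operatorname{poly}\log n)$ time. For $\Local$ and $\PostProcess$, this is essentially inherited from the pure BFS/DFS analyses in Sections~\ref{sec:applications-bfs} and the DFS analogue, with a single extra QRAM lookup per vertex to consult the current matching $M$. The delicate part is $\BlackPath$: in the combined layered-BFS/augmenting-DFS decision tree, a black path corresponds to a sequence of alternations between matched edges (forced by $M$) and guessed non-edges, together with backtracking steps along dead branches of the DFS. I would handle this by storing, in QRAM, a succinct description of the layered graph constructed by the BFS phase together with pointers that identify, for each DFS stack frame, the next candidate edge and the nearest ancestor from which the DFS will resume after a failure. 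With such pointers, a single traversal step along a black path amounts to $O(1)$ QRAM reads, and so $\BlackPath$ can be answered in $O(\log n)$.

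Once each phase is in place, the result follows by multiplying the per-phase complexity by the $O(\sqrt{n})$ phase count: in the matrix model the time is $O(\sqrt{n})\cdot O(n^{3/2}\log^{2} n)=O(n^{2}\log^{2} n)$ with $O(\sqrt{n})\cdot O(n^{3/2})=O(n^{2})$ queries, and in the list model the time is $O(\sqrt{n})\cdot O(\sqrt{(m+n)n}\log^{5/2} n)=O(n\sqrt{m+n}\log^{5/2} n)$, matching the bounds of the proposition. The only place where any real care is needed is in the $\BlackPath$ bookkeeping across the BFS/DFS transition within a phase; everything else is assembled from the already-established building blocks.
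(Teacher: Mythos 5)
Your high-level plan---Hopcroft--Karp driven through the decision-tree-plus-guesser machinery---is exactly what the paper does, and the per-phase BFS/DFS picture with padded depth $O(n^2)$ and $O(n)$ guesser mistakes in the matrix model is the right local accounting. The genuine gap is in how you stitch the phases together. You invoke Theorem~\ref{thm:informal} separately for each of the $O(\sqrt n)$ phases and then run the resulting bounded-error quantum subroutines in sequence, measuring each phase's output and feeding it forward. Each such subroutine errs with constant probability, so the probability that all $\Theta(\sqrt n)$ phases succeed decays exponentially; restoring a constant overall success probability forces you to amplify each phase to error $O(1/\sqrt n)$, costing an extra $\Theta(\log n)$ multiplicative factor in both time and queries. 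That yields $O(n^2\log^3 n)$ time and $O(n^2\log n)$ queries, strictly worse than the $O(n^2\log^2 n)$ time and $O(n^2)$ queries claimed.

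The paper sidesteps this by treating the \emph{entire} Hopcroft--Karp execution---all $O(\sqrt n)$ phases concatenated---as a single decision tree with one guessing algorithm, and invoking Theorem~\ref{thm:binaryClassical2quantumtime} exactly once. The intermediate matching, layered graph, and DFS stack are then carried along inside the classical state encoding of that one big decision tree (just as $V,L,Q,E_\cS$ are carried along in the plain BFS case of Proposition~\ref{pro:BFS-based}), rather than being extracted by measurement and reinserted between separate quantum runs. The padding you correctly identify as needed for $\cA_{\BlackPath}$ is then applied once, globally, to this big tree; this global height increase is exactly what pushes the query complexity to $O(n^2)$ instead of the $O(n^{7/4})$ of~\cite{LL16}, as the paper emphasizes. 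Your $\cA_{\BlackPath}$ bookkeeping sketch across the BFS/DFS transition \emph{within} a phase is in the right spirit and close to what the paper intends, but the quantum-subroutine composition \emph{across} phases is the step that breaks.
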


The proof of this proposition is based on the classical algorithm of Hopcroft and Karp~\cite{HK73}. Nevertheless, we note that the bounds on the quantum query complexity of finding the maximum bipartite matching is $O(n^{7/4})$ in the adjacency matrix model~\cite{LL16} and $O(n^{3/4}\sqrt{m+n})$ in the adjacency list model~\cite{BT20}. However, here, in order to be able to implement the \emph{traverse of the black paths} mentioned above at low cost, we need to increase the number of queries to $\Omega(n^2)$.

\begin{proposition}[\textsc{maximal matching}]\label{pro:maximal}
The time complexity of finding a maximal matching in an unweighted graph $\cG$ is $O(n^{3/2}\log^2 n)$ in the adjacency matrix model, and $O(\sqrt{(m+n)n}\log^{5/2} n)$ in the adjacency list model.
\end{proposition}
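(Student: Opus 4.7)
The plan is to apply Theorem~\ref{thm:informal} to the decision-tree algorithm for maximal matching used in~\cite{BT20} after verifying that the three subroutines Local, BlackPath and PostProcess can be implemented in polylogarithmic time. In the adjacency matrix model, the classical algorithm maintains a matching $M$ (initially empty) and enumerates vertex pairs $(u,v)$ with $u<v$ in lexicographic order. Before each pair the algorithm consults $M$ (no query needed): if either $u$ or $v$ is already matched the pair is skipped, otherwise it queries whether $\{u,v\}$ is an edge of $\cG$ and, on a positive answer, adds $(u,v)$ to $M$. The output is maximal because any edge not in $M$ must have had an already-matched endpoint at the time it was considered. The guessing algorithm simply predicts ``no edge'' at every query. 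Since queries are made only on pairs both of whose endpoints are unmatched, every ``yes'' immediately creates a new matching edge, so the number of wrong guesses is at most $|M|\le n/2$, giving $G=O(n)$. The worst case (e.g.\ $\cG$ is edgeless) forces all $\binom{n}{2}$ pairs to be queried, hence $T=O(n^{2})$ and $\sqrt{GT}=O(n^{3/2})$, matching the query bound of~\cite{BT20}. The list-model variant enumerates entries of the adjacency lists and, with the guessing strategy of~\cite{BT20}, yields $T=O(m+n)$ with the same $G=O(n)$, so $\sqrt{GT}=O(\sqrt{(m+n)n})$.

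Next, I would verify the subroutines required by Theorem~\ref{thm:informal}. A decision-tree node is encoded by the pair $((u,v),M)$ it is about to consider, and we store $M$ together with a balanced order-statistics tree on the currently unmatched vertex set in QRAM. Local needs only to produce the two children of such a node: the black (``no edge'') child advances to the next pair of unmatched vertices in lex order via one rank/select in the tree, while the red (``edge'') child additionally removes $u$ and $v$ from the unmatched set. Both operations cost $O(\log n)$. PostProcess returns the stored $M$, of size $O(n)$, which is dominated by the overall $\tilde O(\sqrt{GT})$ bound.

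The main obstacle is BlackPath, because along a long black path we must jump to an arbitrary position without walking through the intermediate nodes one at a time. Along a black path only ``no edge'' answers are received, so $M$ is invariant and the visited nodes are precisely the successive unmatched pairs in lex order. Given a starting node and a jump length $k$, I would find the required target by a single descent in the order-statistics tree: if there are currently $U$ unmatched vertices $w_{1}<w_{2}<\cdots<w_{U}$, the first coordinate $w_{i}$ contributes $U-i$ subsequent pairs, and summing these cumulative counts while descending the balanced tree locates the target pair in $O(\log n)$ time. Feeding $G=O(n)$, $T=O(n^{2})$ and the $O(\log n)$ subroutine costs into Theorem~\ref{thm:informal} gives the matrix-model bound $O(n^{3/2}\log^{2}n)$; in the list model the non-binary input alphabet $\ell=n$ contributes an additional $\sqrt{\log n}$ factor through the span-program construction, yielding the claimed $O\bigl(\sqrt{(m+n)n}\,\log^{5/2}n\bigr)$.
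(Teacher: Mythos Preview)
Your proposal is correct and follows the same high-level strategy as the paper: apply Theorem~\ref{thm:binaryClassical2quantumtime} to the greedy maximal-matching algorithm, with $T=O(n^2)$, $G\le n/2$, and polylogarithmic subroutines. The one substantive difference is in how the classical algorithm is set up. You skip pairs $(u,v)$ in which one endpoint is already matched, so the nodes along a black path are precisely the \emph{unmatched} pairs in lex order; this forces you to maintain an order-statistics tree over the unmatched vertex set so that $\cA_{\BlackPath}$ can jump to the $k$-th such pair in $O(\log n)$ time. The paper instead keeps the algorithm deliberately wasteful: it queries \emph{every} pair $(v_i,v_j)$ for $i,j\in[n]$, even when $i$ or $j$ is already matched (Algorithm~\ref{alg:maximal}). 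An edge of the decision tree is declared red only when the full condition in Line~\ref{algline:maximal-if} holds (edge present \emph{and} both endpoints unmatched), so $G\le n/2$ still holds, while along any black path the only part of the state that changes is the pair of loop counters $(i,j)$. Hence $\cA_{\BlackPath}$ becomes pure arithmetic on $(i,j)$ with no auxiliary data structure. Your route works, but the paper's trick of \emph{not} skipping redundant queries is what buys the trivial $\cA_{\BlackPath}$; this is the same device used earlier for BFS, and it is worth internalising since it is the simplest way to satisfy the BlackPath hypothesis of Theorem~\ref{thm:binaryClassical2quantumtime}.
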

 The classical algorithm that we use to prove this proposition, extends a matching in a greedy manner.

\subsection{Related works} \label{ssec:related}
Assume that we have a time-efficient quantum algorithm computing a function $f$. It has been recently shown that we can convert this algorithm into a span program and compile it back into a quantum algorithm computing the same function, with a poly-logarithmic overhead in its time complexity~\cite{CJOP20}. This suggests that span programs not only characterize the quantum query complexity of a function, but also its quantum time complexity up to a poly-logarithmic factor. 

The problem of st-connectivity on a graph $G$ given in adjacency matrix model has a quantum algorithm with time $\tilde{O}(\sqrt{kn})$, where $k$ is the distance from $s$ to $t$~\cite{BR12}. This algorithm only detects the presence of an st-path. Our algorithms for this problem in Proposition~\ref{pro:BFS-based} and Proposition~\ref{pro:BFS-list-based}, however, \emph{find} an st-path poly-logarithmically faster.

The problems of detecting bipartiteness, and whether $G$ is a forest or contains a cycle, have space-efficient quantum query algorithms with time complexity $\tilde{O}(n^{3/2})$ in the adjacency matrix model and $\tilde{O}(n\sqrt d)$ in the adjacency list model, where $d$ is the maximum degree of $G$~\cite{CMB16}. 

The time and query complexities of some of graph-theoretic problems including minimum spanning tree, connectivity, strong connectivity and single source shortest paths, are also studied in~\cite{DHHM06}.

The time complexity of bipartite matching is investigated by Ambainis and {\v{S}}palek~\cite{AS06}. Also, the problem of maximal matching is studied by D\"{o}rn~\cite{Dorn09}.


We note that we can give a proof of Theorem~\ref{thm:informal} using the algorithm of Lin and Lin~\cite{LL16}. Their algorithm is based on finding the first mistake of the guessing algorithm at each stage; starting from the root of the decision tree find the first mistake, then look for the next mistake etc. until reaching a leaf.
To implement this algorithm we need to use the BlackPath subroutine mentioned in the statement of Theorem~\ref{thm:informal}. Nevertheless, repeating the ``find-the-first-mistake subroutine" many times may introduce errors. To reduce this error we need to repeat the whole algorithm logarithmically many times. This introduces an extra logarithmic factor in the query and time complexity of the algorithm, giving a worse bound comparing to what we prove with our method (our precise bound is given in Theorem~\ref{thm:binaryClassical2quantumtime}). 

\paragraph{Structure of the paper:} In the following section we fix some notations and define non-binary span programs. In Section~\ref{sec:decision tree-alg}  we show that how given a pair of decision tree and a guessing algorithm, we can turn them to a non-binary span program, and also describe the general structure of the quantum algorithm associated to that span program.
Our main contribution, namely the proof of Theorem~\ref{thm:informal} comes in Section~\ref{sec:TimeEfficient}. Proofs of the above propositions on the applications of Theorem~\ref{thm:informal} come in Section~\ref{sec:applications-bfs}.

\section{Preliminaries}
To compute a function $f:D_f\to[m]$, with domain $D_f\subseteq [\ell]^n$, in the query model, it is assumed that access to the input $x= (x_1, \dots, x_n)\in D_f$ of $f$ is provided by queries to its coordinates. 
In a classical query algorithm, we ask the value of some coordinate and based on its answer, decide to query another coordinate; at the end we output the result. 

Such a classical query algorithm can be modeled by a \emph{decision tree} $\cT$ whose internal vertices are associated with queries, i.e., indices $1\leq j\leq n$, and whose edges correspond to answers to queries, i.e., elements of $[\ell]$. At each vertex the algorithm queries the associated index, and moves to the next vertex via the edge whose label equals the answer to that query. The algorithm ends once we reach a leaf of the tree. The leaves are labeled by elements of $[m]$, the output set of the function, and determine the output of the algorithm. The query complexity of the algorithm is the maximum number of queries in the algorithm over all $x\in D_f$, which is equal to the depth of the decision tree.

When the input alphabet is non-binary ($\ell>2$), it is sometimes useful to label outgoing edges of a vertex of the decision tree $\cT$ not by elements of $[\ell]$, but with subsets of $[\ell]$ that form a \emph{partition}. The point is that sometimes the query algorithm is ignorant of the exact value of a query answers, and depends only on a subset to which the query answer belongs. We refer to~\cite{BT20} for more details and explicit examples on this.

In the quantum case, queries can be made in superposition. Such a query to an input $x$ can be modeled by the unitary  $O_x$:
\begin{equation*}
O_x|j,p\rangle=|j,(x_j+p) \mod \ell \rangle,
\end{equation*}
where the first register contains the query index $1\leq j\leq n$, and the second register stores the value of $x_j$ in a reversible manner. Therefore, a quantum query algorithm for computing $f(x)$ starts with a computational basis state (that is input independent), applies a unitary of the form 
\begin{equation*}
U_kO_x\ldots U_2O_xU_1,
\end{equation*}
where $U_i$'s are input independent (yet they depend on $f$ itself), and ends
by a measurement in the computational basis that determines the outcome of the algorithm. We say that an algorithm computes $f$, if for every $x\in D_f\subseteq [\ell]^n$ the algorithm outputs  $f(x)$ with probability at least $2/3$.

The query complexity of such an algorithm is the number of queries, i.e., the number of $O_x$'s in the sequence of unitaries. 
The time complexity of this algorithm equals the sum of the time complexity of implementing individual unitaries where the time complexity of implementing $O_x$ is assumed to be $1$.

\subsection{Non-binary span program}\label{sec:NBSP}
A \emph{non-binary span program} (NBSP) evaluating a function $f:D_f\rightarrow[m]$ with $D_f\subseteq [\ell]^n$ consists of:
\begin{itemize}
\item a finite-dimensional inner product space of the form 
$$H=H_1\oplus H_2\oplus\ldots H_n\oplus H_{\rm free}\oplus H_{\rm forbid},$$ 
where each $H_j$ for $1\leq j\leq n$ can be written as 
$H_j=H_{j,0}+\cdots+H_{j,\ell-1}$. Also, $H_\free$/ $H_{\rm forbid}$ is the space of the vectors that are \emph{always available/unavailable}. These two extra vector spaces are sometimes useful in the implementation of the algorithm associated to a span program.
\item a finite-dimensional inner product  vector space $\cV$,
\item $m$ non-zero target vectors $|t_0 \rangle, |t_1 \rangle,\ldots ,|t_{m-1} \rangle\in \cV$,
\item a linear operator $A: H\to \cV$.
\end{itemize}
Given these data, for any $x\in D_f$, let 
$$H(x):=\bigoplus_{1 \leq j \leq n}H_{j,x_j}\oplus H_{\rm free}\subseteq H.$$ 
Then $AH(x)$, i.e., the image of $H(x)$ under $A$, is called the \emph{space of available vectors} for $x$.
We say that the span program evaluates $f$ if for every $x\in D_f$ there exists 
\begin{itemize}
\item a positive witness $\ket{w_x}\in H(x)$ such that $A\ket{w_x}=\ket{t_{f(x)}}$ and
\item a negative witness $\ket{\bar{w}_x}\in \cV$ such that $\bra{\bar{w}_x}AH(x)=0$ and $\bra{\bar{w}_x}t_\beta\rangle=1-\delta_{f(x),\beta}$ for all $\beta\in[m]$.
\end{itemize}
Here, the first condition guarantees that $\ket{t_{f(x)}}$ belongs to the space of available vectors, and the second condition says that $\ket{t_\beta}$ for $\beta\neq f(x)$ does not belong to the space of available vectors.

We say that the span program evaluates the function $f$ if for every $x\in D_f$, $\ket{t_\alpha}$ belongs to the span of the available vectors $I(x)$ if and only if $\alpha=f(x)$. 

The positive and negative complexities of the NBSP together with the collections $w$ and $\bar w$ of positive and negative witnesses are defined by
\begin{align*}
&\mathrm{wsize}^+(w,\bar{w}):=\max_{x\in D_f} ~\|\ket{w_x}\| ^2,\\
&\mathrm{wsize}^-(w,\bar{w}):=\max_{x\in D_f} ~\|A^\dagger \ket{\bar{w}_x}\| ^2.
\end{align*}
Then the complexity of the NBSP is equal to
\begin{align}\label{eq:wsize-pwwb}
 \mathrm{wsize}(P,w,\bar{w})=\sqrt{\mathrm{wsize}^-(P,w,\bar{w})\,\cdot\,\mathrm{wsize}^+(P,w,\bar{w})}. 
\end{align}

\paragraph{Non-binary span program with orthogonal inputs:} We use the non-binary span program of~\cite{BT19} in the \emph{orthogonal} sense, that are particularly useful when $\ell=2$. These are a restricted class of NBSPs, so here we mention their differences. Suppose that for each $j$, there is an \emph{orthonormal} basis for $H_j$ such that each of its subspaces $H_{j, q}$ is spanned by a subset of that basis. 
Then, we may denote the image of each of these subsets under the map $A$ by $I_{i,q}\subseteq \cV$. We note that, in this case,
the subspaces $H_{j, q}$ are determined by the index sets of those orthonormal bases, which are the same sets that index vectors in $I_{j, q}$'s. Thus, to describe
such a span program, we may only specify subsets $I_{j,q}\subseteq \cV$ for every $1\leq j\leq n$ and $q\in [\ell]$, as well as $I_\free, I_\forbid\subset \cV$.
Then, the set of input vectors $I\subseteq \cV$ is defined by 
$$I=I_\free\cup I_\forbid\cup \Bigg(\bigcup_{j=1}^n \bigcup_{q\in [\ell]} I_{j,q}\Big),$$
and for every $x\in D_f$ the set of \emph{available vectors} $I(x)$ is defined by 
$$I(x)=I_\free\cup \bigcup_{j=1}^n I_{j,x_j}.$$
Target vectors, and negative/positive witnesses remain the same. 
Note that in this case, the linear operator $A$ can be thought of as a $d\times |I|$ matrix consisting of all input vectors as its columns where $d=\dim \cV$.

\section{From decision trees to quantum algorithms}\label{sec:decision tree-alg}
In this section we first show how we can convert a
non-binary span program with a special structure to a quantum query algorithm. We then prove how we can convert a classical query algorithm that computes a function $f$ and a guessing algorithm that predicts queries to its input, into a non-binary span program. This non-binary span program has the required special structure.

\subsection{From the span program to a quantum algorithm}

In order to use span programs for proving Theorem~\ref{thm:informal} we need to describe the quantum algorithm associated to a non-binary span program. 

\begin{theorem}\label{thm:QQAlg}
Consider a non-binary span program for a function $f:D_f\to [m]$ with $D_f\subseteq [\ell]^n$.
Suppose that the target vectors of this span program are of the form
$$\ket{t_\alpha} = \ket{z_0} - \ket{z_{\alpha}}, \qquad 1\leq \alpha\leq  m,$$
for some vectors $\ket{z_1}, \dots, \ket{z_m}\in \cV$, and the negative witness $\ket{\bar{w}_x}$ for every $x$ satisfies
$$\bra{\bar{w}_x}z_0\rangle=1, \quad \text{ and } \quad \bra{\bar{w}_x} z_\alpha\rangle=\delta_{f(x),\alpha}, \qquad 1\leq \alpha\leq m.$$
Let $\epsilon>0$ be a constant, and $\wsize^+, \wsize^-$ be the positive and negative witness sizes of the span program.
Let $A'$ be a matrix with columns $\frac{\sqrt 2\epsilon}{\sqrt{\wsize^+}}\ket{z_0} , \frac{-\sqrt 2\epsilon}{\sqrt{\wsize^+}}\ket{z_1} , \ldots , \frac{-\sqrt 2\epsilon}{\sqrt{\wsize^+}}\ket{z_m}$, i.e.,
\begin{align*}
    A'   = &
     \frac{\sqrt 2\epsilon}{\sqrt{\wsize^+}} 
    \begin{bmatrix}
    \ket{z_0} & -\ket{z_1} & \ldots & -\ket{z_m}
    \end{bmatrix} \\ = &
    \frac{\sqrt 2\epsilon}{\sqrt{\wsize^+}} \Big(\ket{z_0}\bra{e_{z_0}}-\sum_{\alpha=1}^m\ket{z_\alpha}\bra{e_{z_\alpha}} \Big)
    ,
\end{align*}
where $e_{z_j}$'s are indices of the columns of $A'$ .
Also, let $M=[A', A]$ be a column-padded matrix, with first the columns of $A'$ and then the columns of $A$.  Equivalently, define
$M:{\rm span}\{\ket{e_{z_0}}, \ket{e_{z_1}}, \dots, \ket{e_{z_m}}\}\oplus H\to \cV$  by $M=A'\oplus A$.
Let  $\Lambda$ be the orthogonal projection on the kernel of $M$, and let $$\Pi_x=\sum_{j\in\{0,\ldots,m\} } \ket{j}\bra{j}+ \Pi'_{x},$$
where $\Pi'_{x}$ denotes projection on $AH(x)$.
Then, there is a bounded error quantum algorithm computing $f(x)$ based on repeated applications of the reflections $2\Lambda-I$ and $2\Pi_x-I$ for $O(W)$ times, where $W=\sqrt{\wsize^+\wsize^-}$ is the complexity of the span program.

\end{theorem}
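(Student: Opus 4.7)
The plan follows the Reichardt--Ito--Jeffery template for span-program compilation, adapted to the multi-output setting: identify a ``good'' null-vector of $M$ supported on the available subspace whose amplitude on the label register encodes $f(x)$, show that alternating the two reflections rotates a fixed starting state $\ket{e_{z_0}}$ into it in $O(W)$ steps, and read off $f(x)$ by measuring the label register. From $A\ket{w_x}=\ket{z_0}-\ket{z_{f(x)}}$ together with the definition of $A'$, a direct calculation gives
$$\ket{\psi_x}\;=\;-\frac{\sqrt{\wsize^+}}{\sqrt{2}\,\epsilon}\bigl(\ket{e_{z_0}}+\ket{e_{z_{f(x)}}}\bigr)\;+\;\ket{w_x},$$
which satisfies $M\ket{\psi_x}=0$ and $\Pi_x\ket{\psi_x}=\ket{\psi_x}$ (since $\ket{w_x}\in H(x)$ and the label kets are always preserved by $\Pi_x$). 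A constant fraction of $\|\ket{\psi_x}\|^2$ concentrates on $\{\ket{e_{z_0}},\ket{e_{z_{f(x)}}}\}$, so measuring the label register on the normalized $\ket{\psi_x}$ reveals $f(x)$ with constant probability.

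Next I would use the negative witness to bound the number of iterations via Reichardt's \emph{effective spectral gap} lemma. The hypotheses on $\ket{\bar w_x}$ --- namely $\bra{\bar w_x}z_0\rangle=1$, $\bra{\bar w_x}z_\alpha\rangle=\delta_{f(x),\alpha}$, and orthogonality of $\ket{\bar w_x}$ to $AH(x)$ --- imply that the projection of $M^\dagger\ket{\bar w_x}$ onto the available subspace equals $\frac{\sqrt{2}\,\epsilon}{\sqrt{\wsize^+}}\ket{e_{z_0}}$, while its total norm satisfies $\|M^\dagger\ket{\bar w_x}\|^2\le 2\epsilon^2+\wsize^-$. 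The ratio of these two quantities is $\Omega(1/W)$, which is exactly the ``approximate'' eigenvalue-$0$ gap needed to run phase estimation of $(2\Lambda-I)(2\Pi_x-I)$ on $\ket{e_{z_0}}$ with precision $\Theta(1/W)$; this consumes $O(W)$ applications of each reflection and produces the projection of $\ket{e_{z_0}}$ onto $\ker(M)$ intersected with the available subspace. By the first paragraph this projection has constant overlap with the normalized $\ket{\psi_x}$, so the label-register measurement returns $f(x)$ with constant probability, which can be boosted to $2/3$ by $O(1)$ repetitions.

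The main obstacle I expect is the quantitative phase-estimation analysis in the second paragraph. Two subtleties arise. First, $M^\dagger\ket{\bar w_x}$ is not itself in $\ker(M)$, so I cannot invoke a naive ``exact alternating projections'' argument; instead I must use the approximate, effective-gap variant of the span-program analysis, and carefully verify that the chosen precision $\Theta(1/W)$ simultaneously isolates the zero singular value of $M$ from the continuous spectrum and keeps the number of alternations at $O(W)$. Second, the free parameter $\epsilon$ must be tuned so that the positive-witness contribution, of order $\wsize^+/\epsilon^2$, balances the negative-witness contribution, of order $\epsilon^2\,\wsize^-$; the constant choice built into the rescaling $\sqrt{2}\,\epsilon/\sqrt{\wsize^+}$ in the definition of $A'$ makes both terms of order $W=\sqrt{\wsize^+\wsize^-}$, yielding the advertised $O(W)$ complexity.
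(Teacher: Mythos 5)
Your proposal follows the same route as the paper: the positive witness $\ket{w_x}$ generates a null vector of $M$ that is stabilized by $\Pi_x$, the negative witness gives a vector orthogonal to $\ker(M)$ whose $\Pi_x$-projection is small, and the effective spectral gap lemma of Lee, Mittal, Reichardt, {\v S}palek, Szegedy together with Kitaev phase detection gives the $O(W)$ bound. Your $\ket{\psi_x}$ is a rescaling of the paper's $\ket{\psi_x}=\frac{1}{\sqrt 2}(\ket{e_{z_0}}+\ket{e_{z_{f(x)}}})-\frac{\epsilon}{\sqrt{\wsize^+}}\ket{w_x}$, and your use of $M^\dagger\ket{\bar w_x}$ is (up to the rescaling factor $\sqrt{\wsize^+}/(2\epsilon)$) the paper's $\ket{\bar\psi_x}$.

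Two small corrections are in order. First, the $\Pi_x$-projection of $M^\dagger\ket{\bar w_x}$ is $\frac{\sqrt 2\epsilon}{\sqrt{\wsize^+}}\bigl(\ket{e_{z_0}}-\ket{e_{z_{f(x)}}}\bigr)$, not $\frac{\sqrt 2\epsilon}{\sqrt{\wsize^+}}\ket{e_{z_0}}$: the condition $\bra{\bar w_x}z_\alpha\rangle=\delta_{f(x),\alpha}$ contributes the $-\ket{e_{z_{f(x)}}}$ term, and correspondingly $\|M^\dagger\ket{\bar w_x}\|^2\le \frac{4\epsilon^2}{\wsize^+}+\wsize^-$ rather than $2\epsilon^2+\wsize^-$. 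This is precisely what makes the argument work: the projected vector is proportional to $\ket{\phi_{x-}}=\frac{1}{\sqrt 2}(\ket{e_{z_0}}-\ket{e_{z_{f(x)}}})$, which is the component you need to suppress. Second, Kitaev phase detection as used in the paper does not project $\ket{e_{z_0}}$ onto a subspace; it acts (approximately) as $+1$ on small-phase eigenvectors and as $-1$ on large-phase ones, so $R(U)\ket{e_{z_0}}\approx\frac{1}{\sqrt 2}(\ket{\phi_{x+}}-\ket{\phi_{x-}})=\ket{e_{z_{f(x)}}}$ outright, giving $f(x)$ with probability $1-O(\epsilon^2)$ rather than merely constant probability. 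Your ``project, then hope for constant overlap'' readout would also yield a bounded-error algorithm after $O(1)$ repetitions, but the sign-flip mechanism is what the paper actually exploits and it produces the output label directly.
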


The proof of this theorem given in Appendix~\ref{app:QQAlg}, is based on ideas from~\cite{LMRSS11}.

\medskip

In the statement of Theorem~\ref{thm:QQAlg} we restrict the structure of the target vectors and negative witnesses. We note that any non-binary span program can be transformed to a \emph{canonical form}~\cite{BT19} whose target vectors and negative witnesses do have this structure. Thus, this theorem can be applied on any span program after transforming it to a canonical one.

\subsection{From a decision tree to a span program}

In order to use Theorem~\ref{thm:QQAlg} towards the proof of Theorem~\ref{thm:informal}, we need to first construct a non-binary span program based on the setup of Theorem~\ref{thm:informal}. 

\begin{theorem}\label{thm:Classical2quantum}
Assume that we have a classical algorithm  for a function $f:D_f\to [m]$ with $D_f\subseteq [\ell]^n$ whose query complexity is $T$.  Furthermore, assume that we have a guessing algorithm that predicts the values of queried bits, making at most $G$ mistakes. Then there exists a non-binary span program  computing the function $f$ with query complexity $O(\sqrt{GT})$.
\end{theorem}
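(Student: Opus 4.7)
The plan is to extend the binary construction of~\cite{BT20} to the non-binary alphabet setting. I proceed in three steps: a coloring step, a span-program construction, and the witness analysis.

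\textbf{Step 1: coloring of the decision tree.} Given the decision tree $\cT$ and the guessing algorithm $\cG$, at each internal vertex $v$ of $\cT$ with query index $j_v$, I color the outgoing edge labeled by $\cG$'s prediction at $v$ black, and all remaining outgoing edges red. By the assumption on $\cG$, for every input $x\in D_f$ the root-to-leaf path $P(x)$ traced by $x$ contains at most $G$ red edges, while its total length is at most $T$.

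\textbf{Step 2: span-program construction.} I work in the orthogonal style of non-binary span programs reviewed in Section~\ref{sec:NBSP}. Let $\cV$ carry an orthonormal basis indexed by the vertices of $\cT$ (plus an auxiliary register used to encode the output label so that the target vectors take the canonical form required in Theorem~\ref{thm:QQAlg}). For every internal vertex $v$ and every query answer $q\in [\ell]$, I place an input vector of the form $c_{v,q}\bigl(\ket{v}-\ket{v_q}\bigr)$ in the subspace $H_{j_v,q}$, where $v_q$ denotes the child of $v$ along the edge labeled $q$. The weight $c_{v,q}$ is chosen differently on black and red edges, in order to obtain the desired witness bounds (see below). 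I take targets $\ket{t_\alpha}=\ket{z_0}-\ket{z_\alpha}$ with $\ket{z_0}=\ket{r}$ (the root) and $\ket{z_\alpha}$ supported on the leaves labeled $\alpha$, so that a path from the root to any leaf of label $\alpha$ yields a telescoping positive witness for $\ket{t_\alpha}$.

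\textbf{Step 3: witnesses and their bounds.} For the positive witness at input $x$, I take the telescoping combination of input vectors along $P(x)$, in which the edge coefficient cancels $c_{v,q}$; its squared norm equals the reciprocal weighted length $\sum_{e\in P(x)} c_e^{-2}$. For the negative witness I define a potential $\phi_x$ on the vertices of $\cT$ that is constant on $P(x)$ (ensuring orthogonality to the available vectors) and that has the prescribed overlaps $\braket{\phi_x}{z_0}=1$ and $\braket{\phi_x}{z_\beta}=\delta_{f(x),\beta}$. The size of the negative witness is a sum over the unavailable (``cut'') edges of $c_{v,q}^{2}\bigl(\phi_x(v)-\phi_x(v_q)\bigr)^{2}$. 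The edge weights $c_{v,q}$ are tuned so that black edges contribute negligibly to $\wsize^{+}$ (making it dominated by the $\leq G$ red edges) while red edges contribute negligibly to $\wsize^{-}$ (making it dominated by the $\leq T$ cut positions); this gives $\wsize^{+}=O(G)$, $\wsize^{-}=O(T)$, and hence complexity $\sqrt{\wsize^{+}\wsize^{-}}=O(\sqrt{GT})$, as claimed.

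\textbf{Main obstacle.} The central subtlety is the non-binary alphabet: at an internal vertex a single black edge now has up to $\ell-1$ red siblings, and the naive setup introduces a $\log\ell$ loss because the potential $\phi_x$ must simultaneously give the right inner products with all $\ket{z_\beta}$'s reachable through those $\ell-1$ branches. I expect to remove this loss by exploiting the orthogonal decomposition $H_{j}=\bigoplus_{q}H_{j,q}$ and grouping the red siblings at each vertex into a single effective ``red direction'' per vertex, so that the witness analysis feels only \emph{how many} mistakes were made, not the alphabet size. The matching lower-order verification that the constructed $\ket{\bar w_x}$ is genuinely orthogonal to $AH(x)$ reduces to checking constancy of $\phi_x$ along $P(x)$, which follows directly from the definition.
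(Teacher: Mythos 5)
Your Steps 1--3 reproduce the paper's $\ell=2$ construction essentially verbatim, up to a harmless rescaling of the edge weights: the paper sets $W_\black=1/G$ and $W_\red=1/T$, obtaining $\wsize^+=O(GT)$ and $\wsize^-=O(1)$, while your normalization shifts the same total $\wsize^+\wsize^-=O(GT)$ differently between the two sides. The coloring, the telescoping positive witness, and the constant-on-$P_x$ potential as negative witness are all exactly the paper's.

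The gap is in the ``Main obstacle'' paragraph, which is where the content of the theorem actually lies for $\ell>2$, and where you stop at ``I expect.'' Two concrete problems. First, the loss you would suffer from the naive construction is not $\log\ell$ but $\sqrt{\ell-1}$: for a vertex $v$ on $P_x$ whose traversed edge is black, all $\ell-1$ sibling edges are red cut edges, each contributing $c_\red^2$ to $\wsize^-$; summing over at most $T$ such vertices gives $\wsize^-=\Omega((\ell-1)T\,c_\red^2)$, and no choice of weights recovers $O(\sqrt{GT})$ from this. Second, the proposed fix --- grouping the $\ell-1$ red siblings into one effective red direction --- is the right idea, but it is \emph{incompatible} with working ``in the orthogonal style,'' as you claim to. In the orthogonal NBSP framework each $H_{j,q}$ must be spanned by a subset of a fixed orthonormal basis of $H_j$, and there the $\ell-1$ red edges at $v$ necessarily contribute as $\ell-1$ separate input vectors. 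The paper's Appendix~\ref{app:nonbinary} abandons the orthogonal framework: it introduces auxiliary basis vectors $\ket{v,\black}$, $\ket{v,\red}$, $\ket{v^\#}$, sets $H_{j,q}={\rm span}\{\ket{v,C(v,q)}-\ket{N(v,q)}\}$ (so that all red children at $v$ share the single vector $\ket{v,\red}$ and the subspaces for different $q$ are \emph{not} orthogonal), declares $\ket{v,\red}$ forbidden and $\ket{v^\#}$ free, and defines $\widehat A\ket{v,\red}=\sqrt{W_\red}\sum_{u\in N(v)}\ket{u}$ and $\widehat A\ket{v^\#}=\ket{v}-\sum_{u\in N(v)}\ket{u}$. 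This collapses the red contribution at $v$ to a single $W_\red$ term in $\wsize^-$, yielding $\wsize^-\le W_\red T+(W_\black/2+2W_\red)G=O(1)$ and hence $O(\sqrt{GT})$ with no $\ell$ dependence. Your proof as written establishes $O(\sqrt{\ell\,GT})$, not the claimed bound, so the key construction for $\ell>2$ is still missing.
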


We note that this result in the special case of $\ell=2$ is already proven in~\cite{BT20}. Furthermore, the existence of such a non-binary span program in the general case can already be proven using the known bound of $O(\sqrt{GT})$ on the quantum query complexity, and the fact that NBSPs characterize quantum query complexity up to a constant factor. Nevertheless, here in order to use Theorem~\ref{thm:QQAlg} to prove Theorem~\ref{thm:informal}, we need the explicit structure of the span program given by Theorem~\ref{thm:Classical2quantum}.

Let $f:D_f\to [m]$ with $D_f\subseteq [\ell]^n$ be an arbitrary function. Also, let $\cT$ be a decision tree for $f$ with depth $T$. This means that
internal vertices of $\cT$ are indexed by $j\in \{1, \dots, n\}$, and outgoing edges are labeled by elements (or in general subsets in partitions) of $[\ell]$. Moreover, leaves are indexed by elements of $[m]$. Following~\cite{BT20}, we use two colors, black and red, to present the behavior of the guessing algorithm. At each vertex with associated index $j\in \{1, \dots, n\}$, the guessing algorithm predicts a value for $x_j$. We color the outgoing edge of that vertex whose label equals (contains) the predicted value in black. We color the rest of the outgoing edges in red. We call such a coloring of edges of a decision tree a \emph{G-coloring}. Note that, in a G-coloring, each vertex has exactly one black outgoing edge. Also note that, by the assumption in the theorem, in each path from the root to a leaf of $\cT$, there are at most $G$ red edges. 

Here, for simplicity of presentation and the fact that to prove Theorem~\ref{thm:informal}, the non-binary span program with binary inputs is sufficient, we prove Theorem~\ref{thm:Classical2quantum} in the special case of $\ell=2$.   
This proof, taken from~\cite{BT20} is based on a span program with orthogonal inputs. The proof of this theorem in the general case is given in Appendix~\ref{app:nonbinary}.

\begin{proof}
To prove the theorem we present a non-binary span program with complexity $O(\sqrt{GT})$. Our span program however, is not for $f$ but for the function $\tilde f$ which sends any $x\in D_{f}$ to an associated leaf of the decision tree $\cT$. Observe that in the classical algorithm we start with the root and follow edges of $\cT$ labeled by $x_j$'s until we reach a leaf which we denote by $\tilde f(x)$. As the classical algorithm essentially finds $\tilde f(x)$ (from which $f(x)$ can be obtained), to prove the theorem we just need to design a span program for $\tilde f$.

To present this span program first we need to develop some notation. Let $V=V_\internal\cup V_{\rm leaf} $ be the vertex set of $\mathcal T$, where $V_\internal$ denotes the set of internal vertices, and $V_{\rm leaf}$ the leaves of $\cT$. Then, for every vertex $v\in V_\internal$, its associated index is denoted by $J(v)$, i.e., $J(v)$ is the index $1\leq j\leq n$ that is queried by the classical algorithm at node $v$. The two outgoing edges of $v$ are indexed by elements of $\{0,1\}$ and connect $v$ to two other vertices. We denote these vertices by $N(v, 0)$ and $N(v, 1)$. We also represent the G-coloring of edges of $\cT$ by a function $C(v, q)\in \{\black,\red\}$ which denotes the color of the outgoing edge of $v$ with label $q$.

We can now describe our non-binary span program:
\begin{itemize}
\item the vector space $\cV$ is $|V|$-dimensional with the orthonormal basis 
$\{\ket{v}\,:\, v\in V\},$

\item the input vectors are
$$I_{j,q}=\Big\{\sqrt{W_{C(v, q)}}\big(\ket{v}-\ket{{N(v, q)}}\big)  \,:\, v\in V \text{ s.t. } J(v)=j \Big\},$$
where $W_{\black}$ and $W_{\red}$ are positive real numbers to be determined, 
\item the target vectors are indexed by leaves $z$ of the tree: 
$$\ket{t_z}=\ket{z_0}-\ket{z},$$
where as before $z_0\in V$ is the root of $\cT$.
\end{itemize}

For every $x\in D_f$ let $P_x=P_{\tilde f(x)}$ be the path from the root of the decision tree to the leaf $\tilde f(x)$. Thus the target vector $\big|t_{\tilde f(x)}\big\rangle$ equals
$$\big|t_{\tilde f(x)}\big\rangle= \ket{z_0} - \big|\tilde f(x)\big\rangle=\sum_{v\in P_{x}} \frac{1}{\sqrt{W_{C\big(v, x_{J(v)}\big)}}} \left\{\sqrt{ W_{C\big(v, x_{J(v)}\big)}} \left(\ket{v}-\ket{N(v,x_{J(v)})}\right)\right\},$$
where the vectors in the braces are all available for $x$. Since by assumption the number of red edges along the path $P_{x}$ is at most $G$ and the number of all edges of $P_x$ is at most $T$, the positive complexity is bounded by
$$\wsize^+\leq \frac{1}{W_{\red}}G+\frac{1}{W_{\black}}T.$$
We let the negative witness for $x$ be 
$$\ket{\bar{w}_x}=\sum_{v \in P_{x} } \ket{v}.$$
It is easy to verify that $\ket{\bar w_x}$ is orthogonal to all available vectors, and that $\bra{\bar w_x} t_u\rangle = \bra{\bar w_x} z_0\rangle =1$ for all $u\neq \tilde f(x)$. Thus $\ket{\bar w_x}$ is a valid negative witness. Moreover, an input vector of the form
$$\sqrt{W_{C(v, q)}}\big(\ket{v}-\ket{{N(v, q)}}\big),$$
contributes in the negative witness size only if its corresponding edge $(v, N(v, q))$ leaves the path $P_x$, i.e., they have only the vertex $v$ in common. In this case the contribution would be equal to $W_{C(v, q)}$, the weight of that edge. The number of such red (black) edges equals the number of black (red) edges in $P_x$, which is bounded by $ T$ ($G$). Therefore, the negative witness size is bouned by
$$\wsize^-\leq  W_{\black} G+ W_{\red} T\big.$$
Now letting $W_\black=\frac{1}{G}$ and $W_\red=\frac{1}{T}$, we have $\wsize^+\leq 2GT$ and $\wsize^-\leq 2$.
Therefore, the quantum query complexity of $\tilde f$, and hence also $f$ are bounded by $O(\sqrt{GT})$. 
\end{proof}

\section{Time-efficient implementation}\label{sec:TimeEfficient}
This section is dedicated to the proof of our main result that is formally stated in Theorem~\ref{thm:binaryClassical2quantumtime} below. To prove our result, we start with the query-efficient algorithm provided by 
Theorem~\ref{thm:QQAlg} and Theorem~\ref{thm:Classical2quantum} and try to implement it time-efficiently. 

In order to obtain a time-efficient quantum algorithm, in addition to the query complexity of the underlying classical algorithms that we start with, their time complexity also matter. Indeed, we need the cost of certain  operations in the classical algorithms that build the decision tree and the appropriate coloring of its edges to be bounded. 

The state space of our proposed quantum algorithm described by Theorem~\ref{thm:QQAlg} is essentially the edge set (or the vertex set) of the decision tree $\cT$. Indeed, in the classical algorithm also each vertex of the decision tree corresponds to a \emph{state} of the algorithm. Thus the state space of our quantum algorithm is essentially the same as the state space of the underlying classical algorithm. However, in order to implement the algorithm time-efficiently we need to keep track of the correspondence between the state space of the algorithm and vertices of the decision tree. This correspondence is provided by some local operations mentioned above. Thus, in the following we assume that the following classical subroutines have bounded time complexity:

\begin{itemize}

\item \textbf{Local subroutine:} We assume that there is a subroutine $\cA_{\Local}$ that given a state of the classical algorithm (a vertex of the decision tree) outputs the local structure of the decision tree in a neighborhood of that vertex. In particular given a vertex, $\cA_\Local$ returns the followings:
\begin{itemize}
    \item The type of the vertex i.e. root, leaf, or internal vertex
    \item The parent vertex, list of children vertices and the color of the edges connecting them
    \item The query associated to the vertex and the query answer associated with each of the neighboring edges
\end{itemize}

\item \textbf{BlackPath subroutine:} Note that any vertex of the decision tree, except leaves connected to a red edge, belongs to a path of black edges; see Figures~\ref{fig:OR} for an illustrative example. We assume that there is a subroutine $\cA_{\BlackPath}$ that given a state of the classical algorithm (a vertex of the decision tree), returns the followings about the black path containing that vertex:
\begin{itemize}
    \item The \emph{length} of the black path.
    \item Given additional input $k$, the $k$-th vertex on the black path.  
\end{itemize}

\item \textbf{PostProcess subroutine:} The classical algorithm after making all the queries and moving from the root of the decision tree to a leaf, makes some (classical) post-processing in order to compute the final output of the function.  We assume that this is done by a subroutine denoted by $\cA_{\PostProcess}$.

\end{itemize} 

As discussed above the necessity of subroutines $\cA_{\Local}, \cA_{\PostProcess}$ is apparent. However, the necessity of $\cA_{\BlackPath}$ will become clear later. 

We note that $\cA_{\BlackPath}$, gives global information about the structure of the decision tree beyond local information. Nevertheless, recall that black edges correspond to making correct guesses of query answers. Then, we expect that updating the state of the algorithm as we move along a black path is not costly. 

Finally, before giving the formal statement of our main result, we need to clarify a few points regarding our setup for quantum algorithms: 
\begin{itemize}
    \item[-] In our quantum algorithms we assume that \emph{all one-qubit gates} as well as the CNOT gate are available. We also assume the full connectivity of the qubits, meaning that CNOT can be applied on any pair of qubits.
    \item[-] The time (or circuit) complexity of algorithms are measured in terms of the number of gates and the number of queries.
    \item[-] More importantly, we assume accessibility to quantum RAM (QRAM).
\end{itemize}


\begin{theorem}\label{thm:binaryClassical2quantumtime}
Assume that we have a classical algorithm that computes a function $f:D_f\to [m]$ with $D_f\subseteq [\ell]^n$ whose query complexity is $T$.  Furthermore, assume that we have
\begin{itemize}
\item a guessing algorithm that predicts the values of queried bits that makes at most $G$ mistakes, 
\item classical subroutines $\cA_\Local, \cA_\BlackPath$ and $\cA_{\PostProcess}$ described above that can be implemented in time $C_\Local, C_\BlackPath$ and $C_{\PostProcess}$ respectively.
\end{itemize}
  Then, having access to QRAM, there is a quantum algorithm computing the function $f$ with query complexity $O(\sqrt{\log(\ell)GT+ \log^2(\ell)G^2})$ and time (circuit)  complexity 
  $$O \left(\big(C_\Local+C_\BlackPath+\log^2 n\big)\sqrt{\log{(\ell)}GT+\log^2(\ell)G^2} + C_{\PostProcess}\right).$$
In particular, if $\log(\ell)G=O(T)$, that is usually the case in applications, the query complexity equals $O(\sqrt{\log(\ell)GT})$ and the. time complexity is
  $$O \left(\big(C_\Local+C_\BlackPath+\log^2 n\big)\sqrt{\log{(\ell)}GT} + C_{\PostProcess}\right).$$ 
\end{theorem}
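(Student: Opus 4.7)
The plan is to chain the span-program construction of Theorem~\ref{thm:Classical2quantum} (in its non-binary form deferred to the appendix, producing a span program of complexity $W=O(\sqrt{\log(\ell)GT+\log^2(\ell)G^2})$) with the compilation of Theorem~\ref{thm:QQAlg}, and then show that each of the $O(W)$ alternating reflections $2\Pi_x-I$ and $2\Lambda-I$ can be implemented in time $O(C_\Local+C_\BlackPath+\log^2 n)$ using QRAM. A single classical call to $\cA_{\PostProcess}$ on the leaf identified by the algorithm then produces $f(x)$ and accounts for the additive $C_{\PostProcess}$ term. I would take as state space the edge set of $\cT$ together with the ancillary basis states $\ket{e_{z_\alpha}}$ of Theorem~\ref{thm:QQAlg}, each encoded in $O(\log n)$ qubits that store a handle from which $\cA_{\Local}$ recovers local tree information on demand.

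The reflection $2\Pi_x-I$ is the easier one. For each edge basis state $e=(v,N(v,q))$ I would call $\cA_{\Local}$ on $v$ to recover $J(v)$ together with the partition label $q$, perform one query $O_x$ to fetch $x_{J(v)}$, apply a phase flip conditioned on $x_{J(v)}\in q$, and uncompute; the ancillary states $\ket{e_{z_\alpha}}$ are always kept fixed. This costs $O(C_\Local+\log n)$ gate time and two queries per round, contributing the query-complexity and $C_\Local$ pieces of the final bound.

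The main obstacle is implementing $2\Lambda-I$, the reflection through $\ker(M)$. Following the strategy sketched in the introduction, I would first modify $\cT$ (and correspondingly the span-program weights and the matrix $M$) by adjoining auxiliary black edges so that the black subgraph decomposes into a disjoint union of simple cycles while red edges remain isolated, and verify that the witnesses of Theorem~\ref{thm:Classical2quantum} survive this modification up to constant factors. The structural claim to prove is then that $\ker(M)$ admits an explicit orthonormal basis consisting of one vector supported on each red edge together with, for every black cycle of length $L$, the $L-1$ nonconstant Fourier modes along the cycle (the constant mode being excluded by the target-related columns of $A'$). Given such a basis, $2\Lambda-I$ factors as: apply a controlled quantum Fourier transform on each black cycle, phase-flip every nonconstant frequency as well as every red-edge basis vector, and uncompute the QFT. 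The bijection between positions on a cycle and $\{0,\dots,L-1\}$ demanded by the QFT is precisely what $\cA_{\BlackPath}$ supplies through QRAM, and an $L$-point approximate QFT with $L\leq n$ uses $O(\log^2 n)$ gates, so one such reflection runs in $O(C_\Local+C_\BlackPath+\log^2 n)$ time.

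The two subtleties I anticipate are (a) choosing the tree modification and the weight assignment so that $\ker(M)$ genuinely splits as the claimed direct sum of red-edge and black-cycle pieces, which should reduce to a finite linear-algebraic verification on top of the witness construction of Theorem~\ref{thm:Classical2quantum}; and (b) carrying out the per-cycle QFT coherently in superposition over black cycles of different lengths, which requires combining $\cA_{\BlackPath}$-based indexing with QRAM lookups to present each cycle's positions as contiguous computational-basis labels before invoking a standard QFT subroutine. Combining the two reflections gives $O(W)$ rounds each of time $O(C_\Local+C_\BlackPath+\log^2 n)$, and appending one call to $\cA_{\PostProcess}$ at the end yields the time complexity asserted in Theorem~\ref{thm:binaryClassical2quantumtime}.
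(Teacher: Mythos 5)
Your high-level plan (compile the span program via Theorem~\ref{thm:QQAlg}, implement $2\Pi_x - I$ with two queries, implement $2\Lambda - I$ by exposing cycle structure in $\ker(M)$ and applying a QFT) matches the paper's skeleton, but there are two substantive gaps, one in the handling of non-binary alphabets and one in the structure of $\ker(M)$.

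First, for $\ell > 2$ you propose to compile the non-binary span program of Appendix~\ref{app:nonbinary} directly and attribute to it the complexity $O(\sqrt{\log(\ell)GT + \log^2(\ell)G^2})$. Neither claim is consistent with the paper. The NBSP constructed there has complexity $O(\sqrt{GT})$ with no $\log(\ell)$ factor, and the paper explicitly states that it does not know how to compile that NBSP time-efficiently: the constraint space $H_{j,q}$ for $\ell > 2$ is not spanned by a single orthonormal set, and the resulting $\ker(M)$ does not have the tractable structure exploited in the binary argument. The actual route is to \emph{first reduce to $\ell = 2$}: each internal vertex keeps its single black outgoing edge, while its red outgoing edges are replaced by a depth-$\lceil\log\ell\rceil$ binary subtree (Figure~\ref{fig:nbtobcorrect}). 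This changes the depth to $T' = O(T + G\log\ell)$ and the number of red edges along any root-to-leaf path to $G' = O(G\log\ell)$; the $\sqrt{\log(\ell)GT + \log^2(\ell)G^2}$ bound is exactly $\sqrt{G'T'}$, and comes entirely from this tree surgery, not from the NBSP.

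Second, and more seriously, your description of $\ker(M)$ — one basis vector per red edge, plus the nonconstant Fourier modes on each black cycle — does not hold under the modification you sketch (adding black return edges to close cycles). A red edge column is a \emph{difference} $\gamma(\ket{v}-\ket{w})$; it couples the two endpoints, so red edges cannot be ``isolated'' from the black cycles or from the root/leaf columns merely by closing the cycles. The crucial missing ingredient is the paper's \emph{pseudo-edges}: for every red edge $(v,w)$ one appends a forbidden red pseudo-edge whose column is the \emph{sum} $\gamma(\ket{v}+\ket{w})$. Passing to $\ket{r^\pm} = \tfrac{1}{\sqrt2}(\ket{\bar r}\pm\ket{r})$ maps each pair to columns supported on a \emph{single} vertex ($\sqrt2\gamma\ket{v}$ and $\sqrt2\gamma\ket{w}$), and this is what cleanly decouples the kernel into (Type I) vectors pairing a root/leaf column $\ket{e_z}$ with one of $\ket{r^\pm}$, and (Type II) vectors, one per black edge, combining a black-edge column with two $\ket{r^\pm}$ columns. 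The Type II vectors for a cycle of length $L$ span an $L$-dimensional subspace (not $L-1$; your ``constant mode is excluded'' claim is not how the counting works), and they are \emph{not} orthogonal — the paper orthonormalizes them by the auxiliary unitary $W = Q\cdot(F\oplus F)\cdot(\Lambda^{-1/2}\oplus\Lambda^{-1/2})$, where $Q$ extends the Type II matrix $B'$ to an invertible matrix with $Q^\dagger Q$ block-diagonal in the cycle Laplacian, which is Fourier-diagonal. The QFT enters as a conjugation inside this $W$, not by literally selecting Fourier modes. Without the pseudo-edges your ``finite linear-algebraic verification'' in subtlety (a) would fail: the red-edge columns entangle the purported pieces and the direct-sum decomposition you want does not exist.

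Once these two points are fixed — reduce $\ell>2$ to $\ell=2$ by the partial-binarization of red subtrees, and enlarge $\cT$ to $\tilde\cT$ with both closing black edges \emph{and} pseudo-edges before computing $\ker(\tilde M)$ — the remainder of your outline (controlled QFT per black cycle indexed via $\cA_\BlackPath$ and QRAM, $O(\log^2 n)$ per round, the $R_\Pi$ implementation, and the final $\cA_\PostProcess$ call) is essentially what the paper does.
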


Note that a decision tree may have an  exponential number of edges, so the state space of the classical algorithm may have at least linear size. That is, to encode the state of the algorithm we may need at least $\Omega(n)$ bits. If this encoding is not done properly, to update the state of the algorithm in each step, we need to update all these $\Omega(n)$ bits, and the cost of implementing subroutines $\cA_\Local$ and $\cA_\BlackPath$ becomes at least $\Omega(n)$. To reduce this cost, we need to encode the state of the algorithm deliberately in such a way that we do not need to read the entire encoding of a state in each update.

Our proof of Theorem~\ref{thm:binaryClassical2quantumtime} is based on the algorithm provided by Theorem~\ref{thm:QQAlg}.
This algorithm consists of repeated  applications of the unitary $U=R_\Lambda R_{\Pi}$, where $R_\Pi=2\Pi-I$ and $R_\Lambda=2\Lambda-I$ are specified by Theorem~\ref{thm:Classical2quantum}. Thus to prove our result we need to implement both $R_{\Pi}$ and $R_\Lambda$ in time $O(C_\Local+C_\BlackPath+\log^2 n)$.
Implementation of $R_\Pi$ is straightforward and will be taken care of later.  However, time-efficient implementation of $R_\Lambda$ is more involved. 

Recall that $R_{\Lambda}$ is the reflection through the kernel of $M$. Thus the implementation of $R_{\Lambda}$ amounts to the inspection of the structure of $\ker( M)$. To this end, in the following we start with the easy case of $\ell=2$ and $G=1$. We observe that by slightly changing the underlying span program (while keeping its complexity of the same order), $\ker( M)$ takes a nice structure. Then employing the Fourier transform, we would be able to implement $R_{\Lambda}$ in time $O(C_\Local +C_\BlackPath+\log^2 n)$. 

\subsection{Special case of $\ell=2$ and $G=1$} \label{ssec:G=1}
Let $f$ be the function that outputs the index of the first marked element in a list. More precisely, let $f:\{0,1\}^n\to \{0,\dots, n\}$ be the function with $f(x)$ being the smallest index $j$ with $x_j=1$. If no such index exists, then $f(x)=0$. Construction of a decision tree for this function is easy: at step $j$ we query $x_j$ and guess that the outcome is $0$. The height of this tree is $T=n$, and the first incorrect guess reveals $f(x)$ meaning that $G=1$. This decision tree, with an extra vertex $z_0$ and an extra edge $r_0$ whose roles will become clear later, is depicted in Figure~\ref{fig:sub-OrDtree}. We conclude that the query complexity of $f$ 
is $O(\sqrt n)$.
The construction of the associated span program to this decision tree and the associated matrix $M$ is as before. The only difference is that here we add an extra vertex $z_0$ as the root of the tree and connected it with edge $r_0$ to the previous root. We assume that $z_0$ corresponds to no query, so the vector $\sqrt{W_\red}(\ket{z_0}-\ket{v_0})$ is always available (it is a free vector). This edge in Figure~\ref{fig:sub-OrDtree} is depicted in red since its weight in $\sqrt{W_\red}(\ket{z_0}-\ket{v_0})$ is $W_{\red}$.  Thus the vertex set of $\cT$ is
\begin{equation*}
 V=\{z_0,z_1,\ldots,z_{n+1}\}\cup \{v_0, \dots, v_{n-1}\},
\end{equation*}
where  $z_0$ is the root, $z_i$'s are leaves and $v_i$'s are internal nodes. The edge set of $\cT$ is
\begin{equation*}
E= \{b_1,\ldots b_n\} \cup\{r_0,\ldots, r_n\},
\end{equation*} 
where $b_i$'s are black and $r_i$'s are red edges. The label of edges $r_1, \dots, r_n$ is $1$, and the label of edges $b_1, \dots, b_n$ is $0$.

\begin{figure}
\begin{subfigure}{.5\textwidth}
  \centering
  \includegraphics[scale=.95]{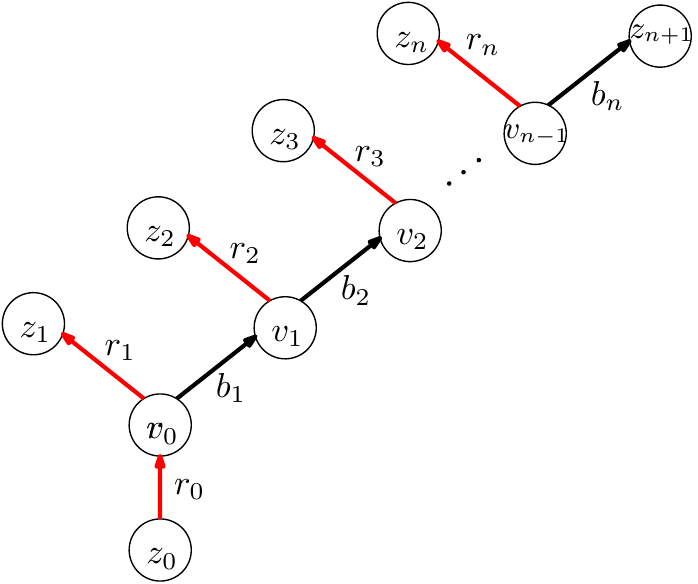}  
  \caption{}
  \label{fig:sub-OrDtree}
\end{subfigure}
\begin{subfigure}{.5\textwidth}
  \centering
  \includegraphics[scale=.95]{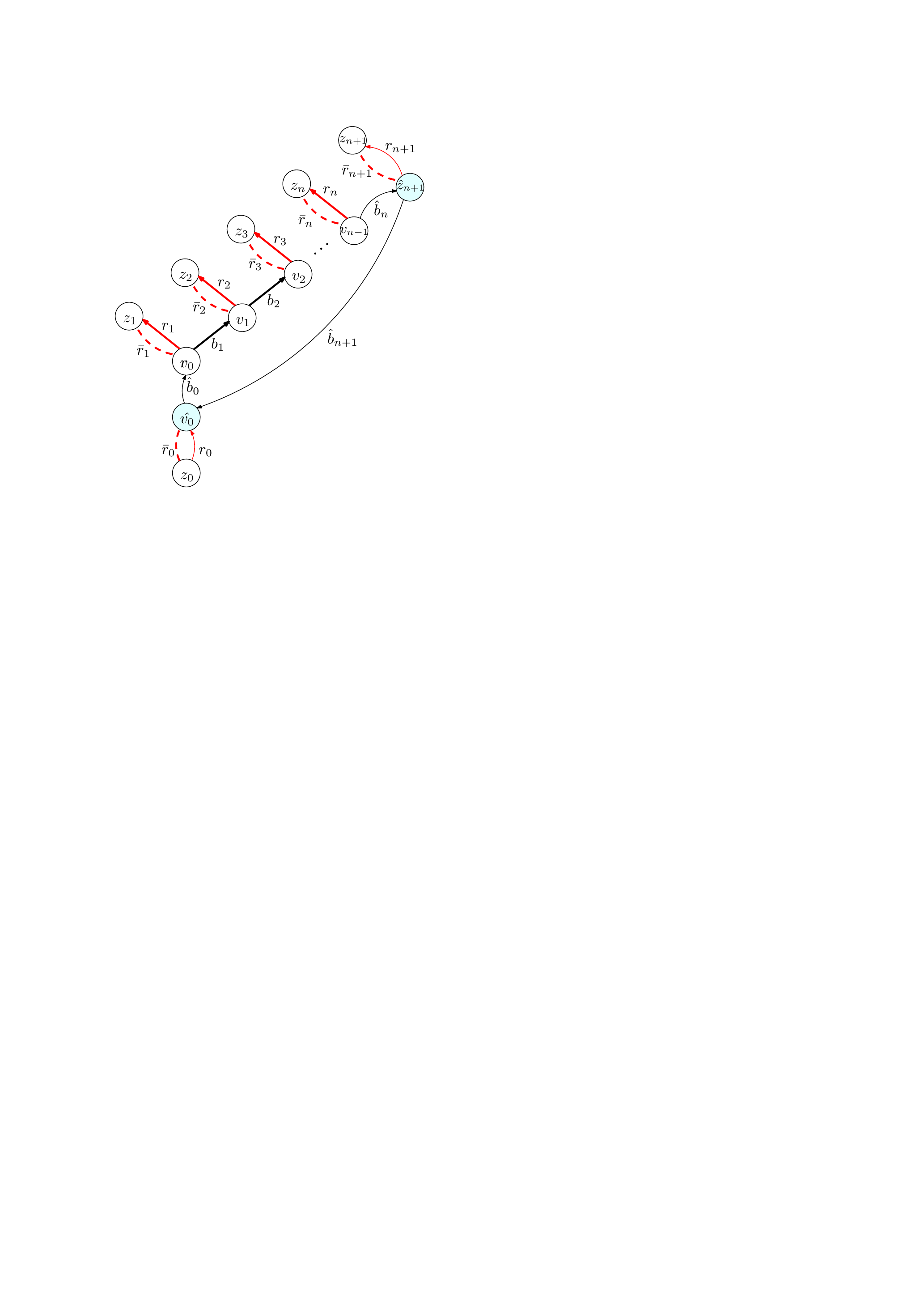}  
 \caption{}
  \label{fig:sub-OrDgraph}
\end{subfigure}
\caption{(a) The decision tree $\cT$ for $G=1$. (b)  The decision graph $\tilde{\cT}$ for $G=1$. 
\\
We note that any internal vertex of the decision tree belongs to a black path. Here, since $G=1$ we have a single black path containing all internal vertices. The subroutine $
\cA_{\BlackPath}$ for the decision tree $\cT$ works as follows: given any vertex $v_i$ outputs $n+1$ as the length of the black path containing $v_i$. Moreover, given any $v_i$ and a number $1\leq k\leq n+1$ outputs the $k$-th vertex of the black path containing $v_i$, which is $v_{k-1}$ if $k\neq n+1$, and $z_{n+1}$ otherwise. 
}
\label{fig:OR}
\end{figure}

Recall that $M$ has a column for every edge, and a column for any leaf and the root in $\cT$ (total of $4n+3$ columns). To obtain a clearer description of $\ker( M)$, which in particular would be useful in the general case in the next subsection, we add new columns to $M$. These columns introduce new vectors in the span program, yet they are carefully chosen in such a way that they change the complexity of the span program only by a constant factor. To keep the graphical picture in mind, we describe these new columns through a modification of the decision tree. In the following we first build a new graph $\tilde \cT$ out of $\cT$ that is no longer a tree, so we call it a decision graph. Next, we describe a matrix $\tilde M$ that is constructed from $\tilde \cT$ and contains $M$ as a submatrix.  

The vertex set of $\tilde{\cT}$ is
\begin{equation*}
\tilde{V}=V\cup \{\hat{v}_0  \cup \hat{z}_{n+1}\},
\end{equation*}
and its edge set is $\tilde{E}  = \tilde{E}^b \cup \tilde{E}^r \cup \tilde{E}^r_{\pseudo}$ with
\begin{align*}
& \tilde{E}^b = \{\hat{b}_0,b_1,\ldots,b_{n-1},\hat{b}_n,\hat{b}_{n+1}\}\\
& \tilde{E}^r = \{r_0,r_1,\ldots,r_{n+1}\}\\
& \tilde{E}^r_{\pseudo} =
\{ \bar{r}_0,\bar{r}_1,\ldots,\bar{r}_{n+1}\}.
\end{align*}
The adjacencies of vertices and edges in $\tilde \cT$ are shown in Figure~\ref{fig:sub-OrDgraph}.
As in $\cT$, any edge of $\tilde \cT$ also takes a color. $\tilde{E}^b$ is the set of black edges, $\tilde{E}^r$ is the set of red edges and $\tilde{E}^r_{\pseudo}$ is the set of \emph{pseudo-edges} that are also in red. As will be clear below, the choice of the term pseudo-edge for the latter edges is due to their associated vectors that are different from those of other edges. Moreover, as vertex $z_0$ is the root of $\cT$ and vertices $\{z_1, \dots, z_{n+1}\}$ are its leaves, for simplicity we call them the root and leaves of $\tilde \cT$ as well. 

We now describe the matrix $\tilde M$. Rows of $\tilde M$ are indexed by vertices of $\tilde \cT$ and its columns are indexed by edges of $\tilde \cT$ and some loops (that are not present in $\tilde M$) associated to the root and leaves. To distinguish rows and columns we denote the index of a column by $\ket{e_\ast}$ if it corresponds to an edge, and by $\ket{\bar e_{\ast}}$ if it corresponds to a pseudo-edge. In particular, we denote the column corresponding to a vertex $z$ (either the root or a leaf) by $\ket{e_z}$. 
Let 
\begin{align}\label{eq:alphabetagamma}
\alpha=\frac{\sqrt{2} \epsilon}{\sqrt{\wsize^+}},\qquad  \beta=\sqrt{W_\black}, \qquad \gamma=\sqrt{W_\red},
\end{align}
where $\epsilon>0$ is a constant, and $\wsize^{+}, W_{\black}, W_{\red}$ are as before. Then columns of $\tilde{M}$ are described in three categories:
\begin{enumerate}
\item $\tilde M$ has a column for the root that contains the vector $\alpha \ket{z_0}$ 
 and a column for each leaf $z$ that contains $-\alpha \ket{z}$. Therefore, $\tilde M\ket{e_{z_0}} = \alpha \ket{z_0}$ and $\tilde M\ket{z_j} = -\alpha \ket{z_j}$ for $1\leq j\leq n+1$. Note that these columns are already present in $M$.
 
\item For any black edge $b=(u,v)$, $\tilde M$ has a column containing the vector $\beta\big(\ket{u}-\ket{v}\big)$,
and for any red edge $r=(u',v')$, it contains the vector $\gamma \big(\ket{u'}-\ket{v'}\big)$. Therefore, $\tilde M \ket{e_b} = \beta\big(\ket{u}-\ket{v}\big)$ and $\tilde M\ket{e_r} = \gamma \big(\ket{u'}-\ket{v'}\big)$.

\item For any pseudo-edge $\bar{r}=(u,v)$, $\tilde M$ contains the vector $\gamma \big(\ket{u}+\ket{v}\big)$, so that $\tilde M \ket{e_{\bar r}} = \gamma \big(\ket{u}+\ket{v}\big)$.  

\end{enumerate} 

Observe that the column associated to a pseudo-edge is a scalar times the sum of its end-points, unlike that of edges that is the difference of its end-points.
The matrix representation of $\tilde M$ is shown in Figure~\ref{fig:Mtilde1}.

\begin{figure}
\centering \includegraphics[scale=1]{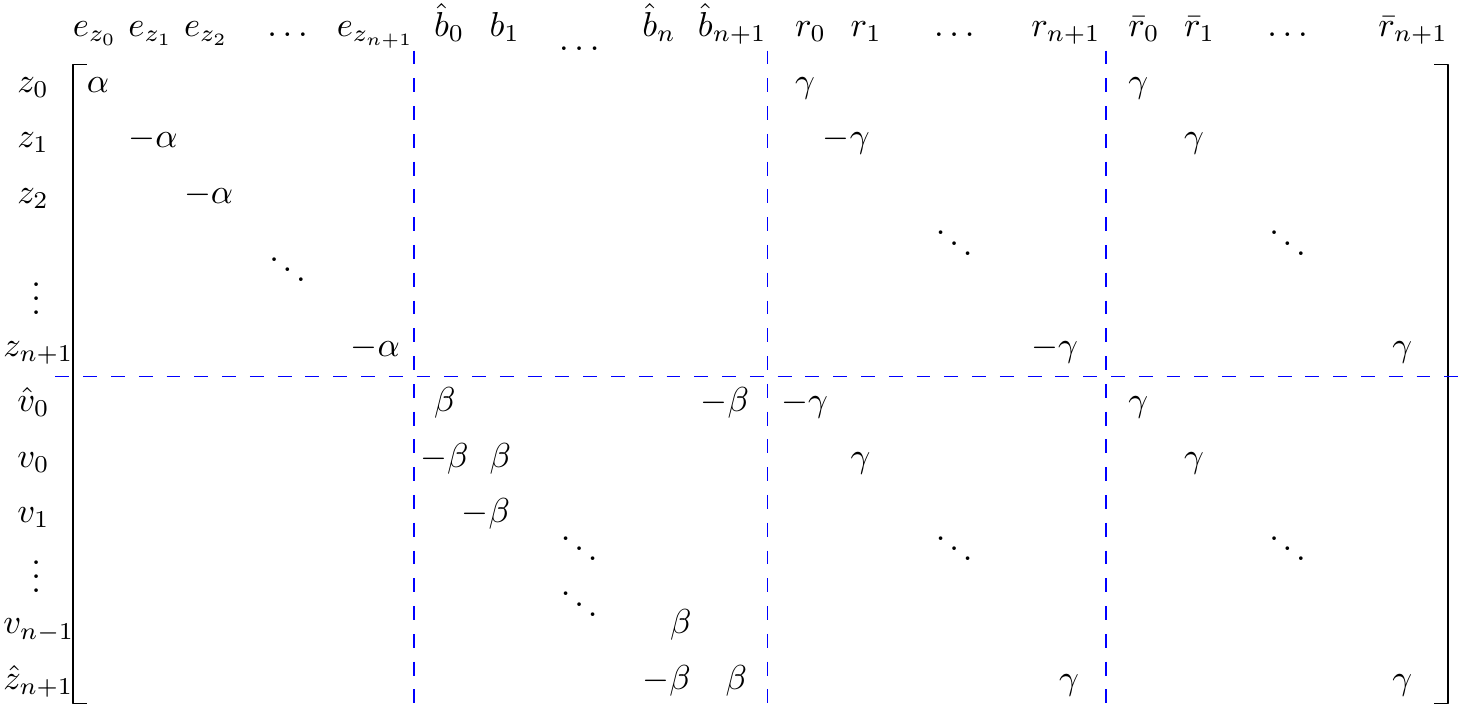}
\figcaption{  Matrix representation of $\tilde{M}$ for $G=1$. Rows of $\tilde{M}$ are indexed by vertices of $\tilde{\cT}$ and its columns are indexed by root, leaves, edges and pseudo-edge of $\tilde{\cT}$. 
 }\label{fig:Mtilde1}
\end{figure}

Before describing the kernel of $\tilde M$ let us briefly explain the span program associated to $\tilde M$. As before, the first $n+2$ columns of $\tilde M$ correspond to target vectors. Indeed, the sum of the first column and the $(j+1)$-th column, $1\leq j\leq n+1$, is the $j$-th target vector. In the decision graph $\tilde \cT$ (as in $\cT$) vertex $v_j$ corresponds to query $x_{j+1}$. Black edges $b_1, \dots, b_{n-1}, \hat b_n$ are labeled by $0$, and red edges $r_1, \dots, r_n$ are labeled by $1$. This means that, e.g., the edge $r_1$ (vector $\tilde M\ket{e_{r_1}} = \gamma\big(\ket{v_0} - \ket{z_1}\big)$) is available if $x_1=1$. We let the edges $r_0, \hat b_0$ and  $r_{n+1}$ be free and \emph{always available}. Furthermore, we assume that the edge $\hat b_{n+1}$ and pseudo-edges $\bar r_0, \dots, \bar r_{n+1}$ are forbidden and \emph{never available}.  Putting these together the span program is fully described. It is not hard to verify that this is a valid span program for the starting function $f$ since as before positive and negative witnesses correspond to paths from the root to the leaves of $\tilde \cT$. The positive witness size of this span program is almost unchanged. However, its negative witness size increases because of the appended never available pseudo-edges, yet the weight of these pseudo-edges are chosen in such a way that the negative complexity increases only by a constant factor. In particular, these weights are equal to the weights of their parallel red edges, so they increase the negative complexity only by a factor of $2$.  Thus, the complexity of the new span program provided by the decision graph $\tilde \cT$ and $\tilde M$ is $O(\sqrt n)$ as before.

We now move to the characterization of $\ker\tilde M$.
For $0\leq i\leq n+1$ let
\begin{align*}
&\ket{{r_i}^-}=\frac{1}{\sqrt 2}\big(\ket{\bar{r}_i}-\ket{r_i}\big)\\
&\ket{{r_i}^+}=\frac{1}{\sqrt 2}\big(\ket{\bar{r}_i}+\ket{r_i}\big).
\end{align*}
Then consider the following sets of vectors:
 \begin{itemize}
 \item[(I)]
 Type I vectors consist of $\frac{-\sqrt 2}{  \alpha}\ket{e_{z_0}}+\frac{1}{\gamma} \ket{{r_0}^+}$ and vectors
 $$\frac{\sqrt 2}{\alpha}\ket{e_{z_i}}+\frac{1}{\gamma} \ket{{r_i}^-}, \qquad 0\leq i\leq n+1. $$
 
\item[(II)] Type II vectors consist of the following vectors:
\begin{align*}
 & \frac{\sqrt 2}{\beta}\ket{\hat{b}_0}-\frac{1}{\gamma}\ket{{r_0}^-}+\frac{1}{\gamma}\ket{{r_{1}}^+}, \\
 &\frac{\sqrt 2}{\beta}\ket{b_i}-\frac{1}{\gamma}\ket{{r_i}^+}+\frac{1}{\gamma}\ket{{r_{i+1}}^+}, \qquad \forall 1\leq i\leq n-1,\\
  &\frac{\sqrt 2}{\beta}\ket{\hat{b}_n}-\frac{1}{\gamma}\ket{{r_n}^+}+\frac{1}{\gamma}\ket{{r_{n+1}}^+}, \\ 
 &\frac{\sqrt 2}{\beta}\ket{\hat{b}_{n+1}}-\frac{1}{\gamma}\ket{{r_{n+1}}^+}+\frac{1}{\gamma}\ket{{r_{0}}^-}.
\end{align*}
\end{itemize}

Straightforward computations show that all these vectors belong to $\ker(\tilde M)$. Moreover, Type I vectors are orthogonal to each other, and to Type II vectors. Moreover, Type II vectors are linearly independent. 
To verify this we may form a matrix $B$ by putting these vectors as its columns, see Figure~\ref{fig:BG1}. Clearly, the second block of $B$ is full-rank, so these vectors are linearly independent.
Thus, Type I and Type II vectors together form $2n+4$ independent vectors in the kernel of $\tilde M$. On the other hand, as is clear from the matrix representation of $\tilde{M}$ in Figure~\ref{fig:Mtilde1}, the last $2(n+2)$ columns of $\tilde M$ are linearly independent. Thus $\tilde M$ is full-rank and its nullity equals $2n+4$. This means that Type I and Type II vectors span $\ker(\tilde M)$. 

To summarize the above findings, the null space of $\tilde M$ consists of two \emph{orthogonal} subspaces spanned by Type I and Type II vectors. Thus the reflection through $\ker(\tilde M)$ can be implemented by composing reflections along each of these subspaces. In the following we show that each of these reflections can be implemented in time $O(\log^2 n)$. 

\begin{figure}
\centering \includegraphics[scale=1]{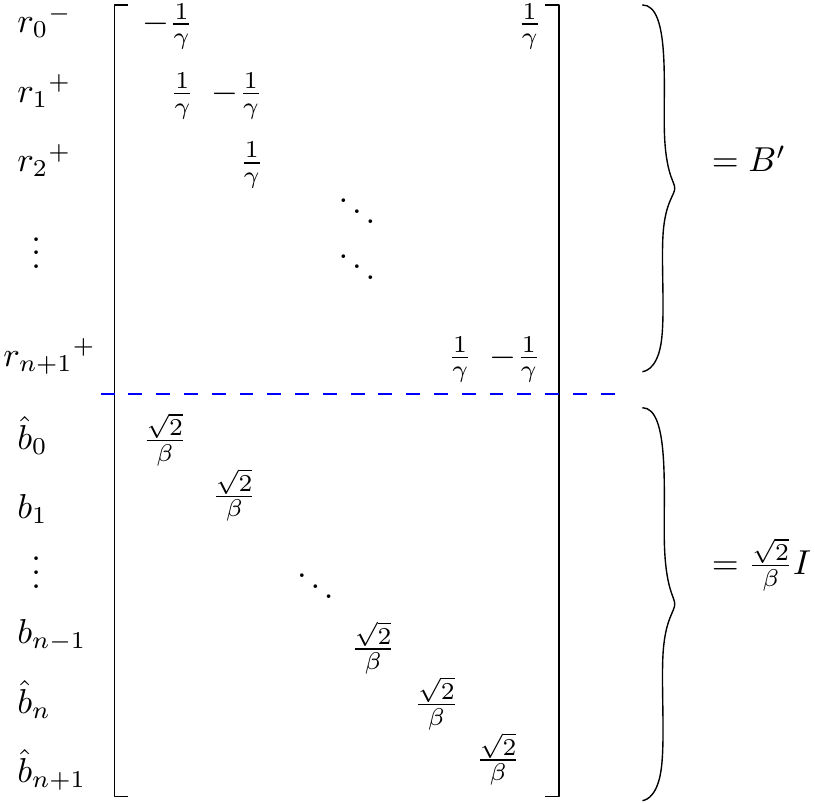}
\figcaption{Matrix representation of the B matrix. 
 }\label{fig:BG1}
\end{figure}

\paragraph{Reflection through Type I vectors:}
Since Type I vectors are orthogonal to each other, reflection through their span is easy.  Indeed, as $\frac{-\sqrt 2}{  \alpha}\ket{e_{z_0}}+\frac{1}{\gamma} \ket{{r_0}^+}$ is orthogonal to other Type I vectors we can implement the reflection through Type I vectors as the composition of the reflection through this vector and reflection through others. The former reflection is easy to implement as it is just a single (sparse) vector. For the later reflection define the $3\times 3$ unitary $K$ by $K\ket 2=\ket 0$ and
\begin{equation*}
 K\ket 0= \sqrt{  \frac{2}{\alpha^2}    +   \frac{1}{\gamma^2}  }\left(\frac{\sqrt 2}{\alpha}\ket 2+\frac{1}{\gamma}\ket{1}\right),\quad   K\ket {1}= \sqrt{  \frac{2}{\alpha^2}    +   \frac{1}{\gamma^2}  }\left(\frac{1}{\gamma}\ket{2}-\frac{\sqrt 2}{\alpha}\ket{1}\right)
\end{equation*}
Also, define $H$ by 
$$H\ket 0=\frac{1}{\sqrt 2} (\ket 0+\ket 1), \qquad H\ket 1=\frac{1}{\sqrt 2}(\ket 0-\ket 1), \qquad H\ket 2=\ket 2.$$
Moreover, consider the relabeling map 
$$P\ket{i}\ket 0 =\ket{\bar r_i}, \qquad P\ket i\ket 1 = \ket{r_i}, \qquad P\ket i\ket 2=\ket{e_{z_i}}.$$ 
Then the desired reflection equals 
\begin{align}\label{eq:PHK-reflection}
P\Big( I_{n+2}\otimes HK\big(2|0\rangle\langle 0|-I\big) K^\dagger H^\dagger\Big) P^\dagger,
\end{align}

and can be implemented in time $O(\log n)$.

\paragraph{Reflection through Type II vectors:} 
This reflection needs more consideration since Type II vectors are not orthogonal to each other. To this end, we construct an orthogonal basis for the span of Type~II vectors. We first extend the matrix $B$ consisting of Type II vectors as in Figure~\ref{fig:BG1}, by adding some column vectors to make an invertible matrix $Q$:
\begin{equation*}
Q=\begin{bmatrix}
B' & \frac{\sqrt 2}{\beta} I \\ 
\frac{\sqrt 2}{\beta} I & -B'^\dagger
\end{bmatrix}.
\end{equation*}
Then we have
\begin{equation*}
Q^\dagger Q=\begin{bmatrix}
B'^\dagger B'+\frac {2}{\beta^2} I & 0 \\ 
0 & B'B'^\dagger+\frac {2}{\beta^2} I
\end{bmatrix} = 
\begin{bmatrix}
L+\frac {2}{\beta^2} I & 0 \\ 
0 & L+\frac {2}{\beta^2} I
\end{bmatrix}, 
\end{equation*}
where
\begin{equation*}
L=B'B'^\dagger=B'^\dagger B'=\frac{1}{\gamma^2}
\begin{bmatrix}
2 & -1 &  &  &  & -1 \\ 
-1 & 2 & -1 &  &  &  \\ 
 & -1 & 2 & \ddots &  &  \\ 
 &  & \ddots & \ddots & -1 &   \\ 
  &   &   & -1 & 2 & -1 \\ 
-1 &   &   &   & -1 & 2
\end{bmatrix}.
\end{equation*}
Since both $B'$ and $L$ are diagonal in Fourier basis we have $F^\dagger B' F=\Lambda_{B'}$ and
\begin{equation*}
\begin{bmatrix}
F^\dagger & 0 \\ 
0 & F^\dagger
\end{bmatrix} 
Q^\dagger Q
\begin{bmatrix}
F & 0 \\ 
0 & F
\end{bmatrix} =
\begin{bmatrix}
\Lambda & 0 \\ 
0 & \Lambda
\end{bmatrix} 
\end{equation*}
where $\Lambda_{B'}$ and $\Lambda$  are diagonal matrices containing eigenvalues of $B'$ and $L+\frac {2}{\beta^2} I$  respectively, and $F$ is the Fourier matrix
$$
F=\frac{1}{\sqrt{n+2}}\sum_{j,k=0}^{n+1}  e^{-2\pi i j k/(n+2)}\ket{j}\bra{k}.
$$
Therefore, the matrix
\begin{equation*}
W:=Q
\begin{bmatrix}
F & 0 \\ 
0 & F
\end{bmatrix} 
\begin{bmatrix}
\Lambda^{-1/2} & 0 \\ 
0 & \Lambda^{-1/2}
\end{bmatrix} 
\end{equation*}
is unitary. On the other hand, by its definition the first $(n+2)$ columns of $W$ are in the span of columns of $B$. This means that these $(n+2)$ columns form an orthonormal basis for the span of columns of $B$. As a result, the reflection through the span of Type~II vectors is equal to
\begin{equation}\label{eq:type2ref}
W
\begin{bmatrix}
I & 0 \\ 
0 & -I
\end{bmatrix} 
W^\dagger.
\end{equation}
Thus, to implement this reflection, it suffices to implement $W$ and $W^\dagger$. 

Using $F^\dagger B' F=\Lambda_{B'}$ we have
\begin{align*}
W= &
\begin{bmatrix}
B' & \frac{\sqrt 2}{\beta} I \\ 
\frac{\sqrt 2}{\beta} I & -B'^\dagger
\end{bmatrix} 
\begin{bmatrix}
F & 0 \\ 
0 & F
\end{bmatrix} 
\begin{bmatrix}
\Lambda^{-1/2} & 0 \\ 
0 & \Lambda^{-1/2}
\end{bmatrix} \nonumber\\
=& 
\begin{bmatrix}
B'F\Lambda^{-1/2} & \frac{\sqrt 2}{\beta} F\Lambda^{-1/2} \\ 
\frac{\sqrt 2}{\beta} F\Lambda^{-1/2} & -B'^{\dagger}F\Lambda^{-1/2}
\end{bmatrix}  
\nonumber
\\ = & 
\begin{bmatrix}
F & 0 \\ 
0 & F
\end{bmatrix} 
\begin{bmatrix}
\Lambda_{B'}\Lambda^{-1/2} & \frac{\sqrt 2}{\beta}\Lambda^{-1/2} \\ 
\frac{\sqrt 2}{\beta}\Lambda^{-1/2} & -\Lambda^\dagger_{B'}\Lambda^{-1/2}
\end{bmatrix}.
\end{align*}
The first matrix in this decomposition is the tensor product of the $2\times 2$ identity matrix with Fourier transform that is implementable in time $O(\log^2 n)$~\cite{NC}.\footnote{We note that the circuit for the Fourier transform in~\cite{NC} works only for powers of two. To circumvent this, we may insert some dummy black edges (with weight $W_{\black}$) in the graph to make the length of all black paths be powers of two. This would increase $T$ by at most a factor of $2$, and does not change the order of the complexity of the algorithm. We note that, this change would enforce modifications in subroutines $\cA_{\Local}$ and $\cA_{\BlackPath}$, that can easily be taken care of.}
Since $\Lambda$ and $\Lambda_{B'}$ are diagonal, rearranging rows and columns of the second matrix, we find that it is a block-diagonal matrix with blocks of size $2$. Then, as entries in each block can be explicitly computed, this matrix can also be implemented in time $O(\log n)$ with $O(1)$ queries given access to QRAM.
More precisely, given an index of a $2\times 2$ block of this matrix, we can compute its entries in time $O(\log n) $ and implement it. Hence, if the index of the block is given in superposition via QRAM, we can implement the whole unitary in time $O(\log n)$.
We conclude that $W$ and similarly $W^\dagger$ can be implement in time $O(\log^2 n)$.

\medskip
Putting all these together we find that the reflection through $\ker(\tilde M)$ can be implemented in time $O(\log^2 n)$. On the other hand, it is not hard to verify that $R_{\Pi}$ can also be implemented in time $O(\log n)$. Then by Theorem~\ref{thm:QQAlg}, $f$ can be computed in time $O(\log^2 n\sqrt n)$.

\subsection{$G> 1$}\label{ssec:G>1}

We can think of a decision tree with $G> 1$ as a union of decision trees with $G=1$. The point is that to construct such a decision tree we can start with a single red edge, and recursively proceed as follows: whenever we see a red edge attach a decision tree with $G=1$ to it as a child subtree. This way of thinking of a decision tree for arbitrary $G$ enables us to use ideas from the case of $G=1$ to prove Theorem~\ref{thm:binaryClassical2quantumtime}. To this end, as in the case of $G=1$, we need to add some \emph{returning} black edges to the decision tree to turn black paths to black cycles. This changes the decision tree $\cT$ to a \emph{decision graph} $\tilde \cT$.
Then,  $\ker(\tilde M)$ would be a direct sum of orthogonal subspaces each of which corresponds to a black cycle, and have the same structure as  the kernel in the case of $G=1$. Then reflection through $\ker(\tilde M)$ can be implemented using similar ideas (via Fourier transform) as before. As the details are explained shortly, this gives the proof of Theorem~\ref{thm:binaryClassical2quantumtime} in the case of $\ell=2$.

The case of $\ell>2$ is slightly different. The point is that, as mentioned before, the NBSP in the case of $\ell>2$ is  a bit different from that of the binary case (see Appendix~\ref{app:nonbinary}), and we do not know if such a span program can directly be compiled without that much overhead in the time complexity. Thus, our strategy to prove the non-binary input alphabet case, is to transform a non-binary decision tree to a binary one. This transformation is in such a way that it changes depth of the tree from $T$ to $O\big(T+\log(\ell)G\big)$, and changes $G$ to $\log(\ell)G$. 
Then using the theorem in the case of $\ell=2$, the desired result follows. 
See Appendix~\ref{app:Non-binary_Implement} for more details on the case of $\ell>2$.


\medskip
Now we focus on the case $\ell=2$ and $G>1$.
Our first step is to construct the decision graph $\tilde \cT$ from $\cT$. We first fix some notation. Let $\cT=(V, E)$ where $V$ is the vertex set and $E$ is the edge set of $\cT$. Then let

\begin{itemize}
\item $V_{\leaf}$ be the set of leaves of $\cT$.
\item $V_\internal$ be the set of internal vertices of $\cT$.
\item $V_{\leaf}^b$ be the set of leaves $z$ whose parent edge is black.
\item $V^r_{\leaf}$ be the set of leaves $z$ whose parent edge is red.
\item $ V^r_\internal$ be the set of internal vertices $v$ whose parent edge is red.
\item $E^r$ be the set of red edges. 
\item $E^b=E\setminus E^r$ be the set of black edges. 
\item $v_p$ for a vertex $v$ be the parent vertex of $v$ in $\cT$.
\item $z_{v}\in V_{\leaf}^b$ be the black leaf in $\cT$ that is connected to $v$ via a path of black edges.
\end{itemize}

As in the case of $G=1$ it is convenient to assume that the root of $\cT$ is a vertex $z_0$ that is connected to an internal vertex $v_0$ (the previous root) via a \emph{red} edge (with weight $\sqrt{W_{\red}}$) that is always available. As in the case of $G=1$ this modification does not significantly change the complexity of the associated span program.

\begin{figure}
\centering \includegraphics[scale=.9]{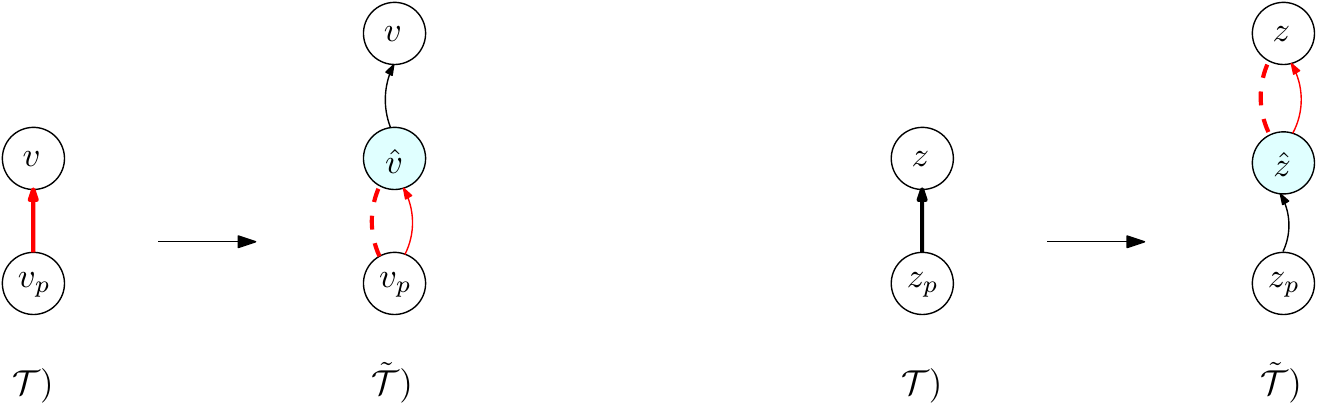}\hspace{1.in}
\figcaption{ Any red edge in $\cT$ introduces two new edges (included in $\tilde{E}_2$ and $\tilde{E}_3$) and a pseudo-edge in $\Tilde{\cT}$. Any leaf in $\cT$  which is at the end of a black path,  introduces two new edges (included in $\tilde{E}_4$ and $\tilde{E}_5$) and a pseudo-edge in $\Tilde{\cT}$    
 }\label{fig:local-modifications}
\end{figure}

We now describe the decision graph of $\tilde \cT$. This graph is obtained by applying three modifications on $\cT$. Two of these modifications are explained in Figure~\ref{fig:local-modifications}. The third one is adding a black edge from the end-point of a black path to its starting vertex. An example of the construction of $\tilde \cT$ from $\cT$ is given in Figure~\ref{fig:TTprime}. More formally,
the vertex set of $\tilde{\cT}$ is 
$$\tilde{V}=V\cup \big\{\hat{z}: z\in V^b_{\rm leaf}\big\} \cup\big\{\hat{v}:v\in V^r_\internal \big\} .$$
Its edge set $\tilde{E}$ consists of the following sets of edges and pseudo-edges:
\begin{enumerate}
\item $\tilde E_1=\big\{(w,v)\in E:\,  v\notin V^r_\internal \cup V^b_{\leaf}\big\}$. The color of these edges is the same as their color in $\cT$.
\item $\tilde E_2=\{(\hat{v},v): v\in V^r_\internal\} $. All these edges take black color.
\item $\tilde E_3=\{(v_p,\hat{v}): v\in V^r_\internal\}$. All these edges take red color.
\item $\tilde E_4=\{(\hat{z} , z): z\in V^b_{\leaf}\}$. All these edges take red color.
\item $\tilde E_5=\{(z_p,\hat{z}): z\in V^b_{\leaf}\}$.  All these edges take black color.
\item $\tilde E_6=\{(\hat{z_v},\hat{v}): v\in V^r_\internal\}$. All these edges take black color.
\item  $\tilde{E}_\pseudo=\big\{\overline{(v,w)}: (v,w)\in \tilde{E}^r\big\}$. These are pseudo-edges with red color.
\end{enumerate}
As before we decompose the edge set of $\tilde \cT$ as $\tilde E^r\cup \tilde E^b\cup \tilde E_{\pseudo}$ where $\tilde E^r$ is the set of red edges, $\tilde E^b$ is the set of black edges and $\tilde E_{\pseudo}$ is the set of red pseudo-edges.

These modifications on $\cT$, resulting in $\tilde \cT$, simplify the structure of the null-space of $M$. In particular, insertion of the pseudo-edges  enables us to think of the graph $\tilde \cT$ as a disjoint union of subgraphs (cycles) consisting of black edges. This becomes clear below.

We now define the matrix $\tilde M$ similar to the case of $G=1$. As before, rows of $\tilde M$ are indexed by vertices of $\tilde \cT$. Moreover, columns of $\tilde M$ are indexed by edges and pseudo-edges of $\tilde \cT$ plus its root and leaves. $\tilde M$ can explicitly be written as 

\begin{figure}
\centering \includegraphics[scale=.75]{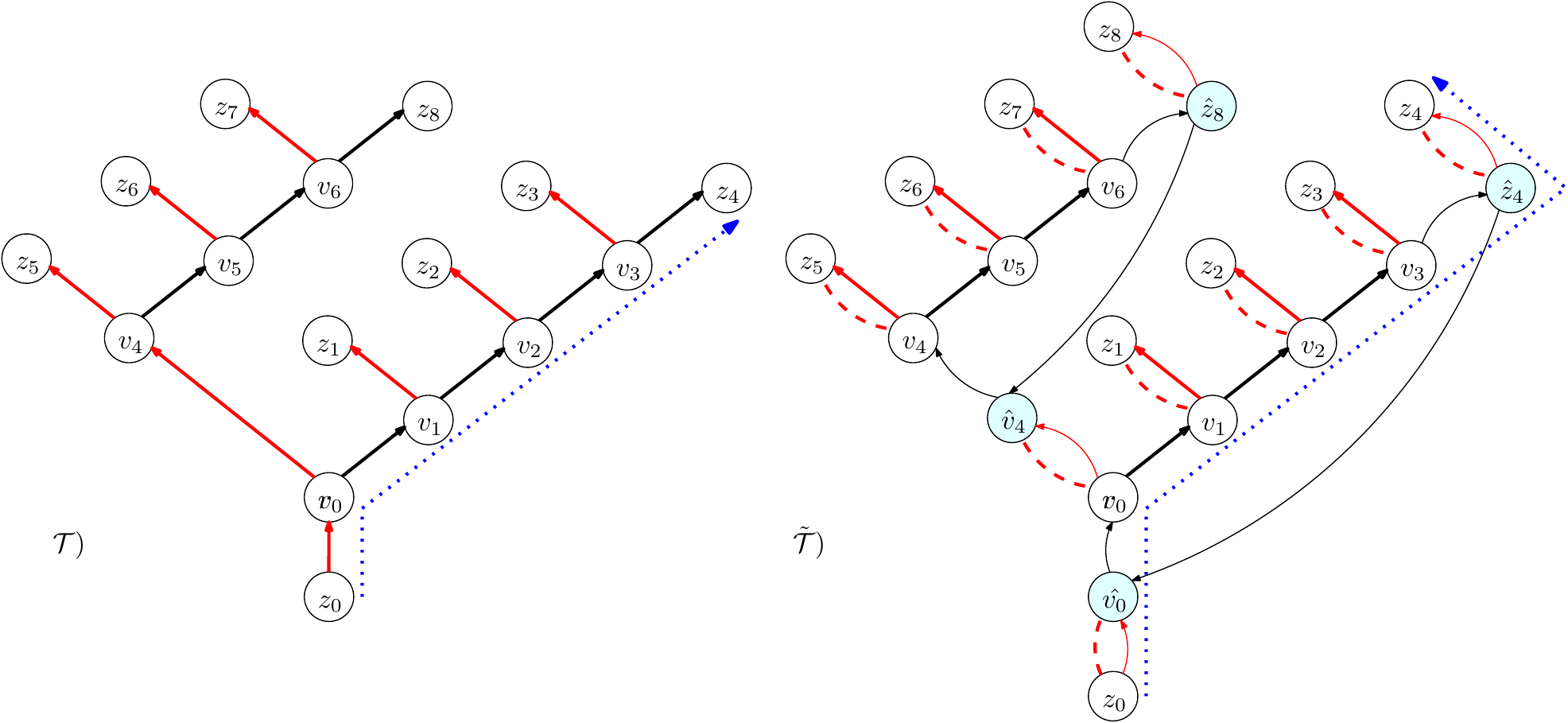}
\figcaption{An example that shows how we build $\tilde{\cT}$ out of $\cT$.  To build $\tilde{\cT}$, we start with $\cT$ and  change the graph using the structure we depicted in Figure~\ref{fig:local-modifications}. We also add a pseudo-edge to any red edge. Pseudo-edges are shown using dashed lines. For any path from the root to a leaf in $\cT$, there exists a path from the root to the same leaf in $\tilde{\cT}$. The dotted line represents one of these paths.
 }
\label{fig:TTprime}
\end{figure}

\begin{align*}
\tilde{M}=&\alpha \ket{z_0}\bra{e_{z_0}}-\alpha\sum_{z\in  V_{\leaf}}\ket{z}\bra{e_z}
+\beta\sum_{(v,w)\in \tilde{E}^b}(\ket{v}-\ket{w})\bra{e_{vw}}\\
&+\gamma\sum_{(v,w)\in \tilde{E}^r} \big(\ket{v}-\ket{w}\big)\bra{e_{vw}}
 +\gamma \sum_{\overline{(v, w)}\in \tilde E_{\pseudo}} \big(\ket{v}+\ket{w} \big)\bra{\bar{e}_{vw}}, 
\end{align*}
where $\alpha, \beta, \gamma$ are defined in~\eqref{eq:alphabetagamma}.

Next we describe the associated span program based on $\tilde M$. As before, target vectors correspond to the sums of the first column of $\tilde M$ associated to its root, and the columns associated to the leaves, i.e., $\ket{z_0} - \ket{z}$ for any leaf $z$. The remaining columns of $\tilde M$ (associated to edges and pseudo edges of $\tilde \cT$) determine the available and unavailable input vectors as follows:

\begin{itemize}
\item $(w,v)\in \tilde E_1$ is available whenever the associated edge in $\cT$ is available. 
\item $(\hat{v},v)\in \tilde E_2$ is available whenever $(v_p, v)$ is available in $\cT$.
\item $(v_p,\hat{v})\in \tilde E_3$ is available whenever $(v_p, v)$ is available in $\cT$.
\item $(\hat{z} , z)\in \tilde E_4$ is available whenever $(z_p, z)$ is available in $\cT$.
\item $(z_p,\hat{z})\in \tilde E_5$ is available whenever $(z_p, z)$ is available in $\cT$.

\item Edges in $\tilde E_6$ and  $\tilde{E}_\pseudo$ are always \emph{unavailable} and forbidden. 

\end{itemize}

We now have a complete description of the span program. It is not hard to verify that this span program computes the same function as the span program associated to $\cT$. 
We then examine the complexity of this span program. 
For the  positive complexity, notice that for any path from the root to a leaf $z$ in $\cT$ of available edges there is a path form the root to the same leaf in $\tilde \cT$ with some newly added edges that are all available. These new edges increase the positive complexity by at most $\frac{1}{W_\black}(G+1)=G(G+1)\leq GT$. Moreover, the negative complexity is increased in $\tilde \cT$ by a term smaller than $W_\red (T+1)=\frac{T+1}{T}\leq 2$. Thus the total complexity of the span program remains the same up to a constant factor.

\paragraph{Kernel of $\tilde{M}$:}
We can now use the structure of the new span program based on $\tilde \cT$ to present a concise description of $\ker(\tilde M)$. As in the case of $G=1$, a basis for $\ker\tilde M$ consists of two types of vectors. Type I vectors correspond to the root and leaves of $\tilde \cT$. Type II vectors come from cycles in $\tilde \cT$ of black edges; any black cycle gives us a collection of Type II vectors in $\ker(\tilde M)$. 

\begin{lemma}\label{lem:kernel} 
For any red edge $(v, w)$ in $\tilde \cT$ let $$\ket{{e_{uw}}^-}=\frac{1}{\sqrt 2}\big(\ket{\bar{e}_{uw}}-\ket{e_{uw}}\big), \qquad \ket{{e_{uw}}^+}=\frac{1}{\sqrt 2}\big(\ket{\bar{e}_{uw}}+\ket{e_{uw}}\big).$$ 
Then the following sets form a basis for the kernel of $\tilde{M}$:
\begin{itemize}
    \item Type I vectors are $-\frac{\sqrt 2}{\alpha}\ket{e_{z_0}} + \frac{1}{\gamma}\ket{e_{z_0\hat v_0}^+}$ and 
$$\frac{\sqrt{2}}{\alpha}\ket{e_z}+\frac{1}{\gamma }\ket{{e_{z_pz}}^-}, \qquad\forall z\in \tilde{V}_{\leaf}.$$

\item Type II vectors consist of the following collection of vectors for any black cycle in $\tilde \cT$. Let   
$(\hat{v}_0, v_0, v_1, \dots, v_k, v_{k+1}=\hat z)$ be a black cycle. Let $(v_i, v'_i)$, for $0\leq i\leq k$ be the red edge connected to $v_i$. Also let $(v', \hat v_0)$ and $(\hat z, z)=(v_{k+1}, v_{k+1}')$ be the red edges connected to $\hat v_0$ and $\hat z=v_{k+1}$ respectively. Then the following vectors belong to $\ker(\tilde M)$:

\begin{align*}
&\frac{\sqrt 2}{\beta}\ket{e_{\hat{v}_0v_0}}     -\frac{1}{\gamma}\ket{{e_{v'\hat{v}_0}}^-}+   \frac{1}{\gamma}\ket{{e_{v_0v'_0}}^+}\\ 
&\frac{\sqrt 2}{\beta}\ket{e_{v_{i-1}v_{i}}}    -\frac{1}{\gamma}\ket{{e_{v_{i-1}v'_{i-1}}}^+}+   \frac{1}{\gamma}\ket{{e_{v_iv'_i}}^+}, \qquad 1\leq i\leq k+1 \\ 
&
 \frac{\sqrt 2}{\beta}\ket{e_{\hat{z}\hat{v}_0}} -\frac{1}{\gamma}\ket{{e_{\hat{z}z}}^+}+\frac{1}{\gamma}\ket{{e_{v'\hat{v}_0}}^-} .
\end{align*}

\end{itemize}

\end{lemma}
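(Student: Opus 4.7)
My plan is to prove the lemma in three steps: verify each listed vector lies in $\ker(\tilde M)$, show the vectors are linearly independent, and match the count with $\dim\ker(\tilde M)$.

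The first step is a direct calculation that hinges on a single observation: the pseudo-edges are chosen so that, for every red edge $(v,w)$,
$$\tilde M\ket{e_{vw}^+}=\gamma\sqrt{2}\,\ket{v},\qquad \tilde M\ket{e_{vw}^-}=\gamma\sqrt{2}\,\ket{w},$$
obtained by taking $\pm$ combinations of $\tilde M\ket{e_{vw}}=\gamma(\ket{v}-\ket{w})$ and $\tilde M\ket{\bar e_{vw}}=\gamma(\ket{v}+\ket{w})$. Combined with $\tilde M\ket{e_{z_0}}=\alpha\ket{z_0}$, $\tilde M\ket{e_z}=-\alpha\ket{z}$ for a leaf, and $\tilde M\ket{e_b}=\beta(\ket{v}-\ket{w})$ for a black edge $b=(v,w)$, each Type~I and each Type~II vector collapses in a single line; e.g.\ the generic Type~II vector gives $\sqrt{2}(\ket{v_{i-1}}-\ket{v_i})-\sqrt{2}\ket{v_{i-1}}+\sqrt{2}\ket{v_i}=0$. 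The use of $\ket{e^-}$ rather than $\ket{e^+}$ in the two boundary Type~II vectors is forced by orientation: the red edges adjacent to $\hat v_0$ and $\hat z$ have these auxiliary vertices as their second endpoint, so it is $\ket{e^-}$ (not $\ket{e^+}$) that isolates them.

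For the independence step, I will argue from the supports. Type~I vectors are supported on the root/leaf columns $\ket{e_{z_0}},\ket{e_z}$ and on $\ket{e^-}$-combinations at the red edge incident to the root or to a leaf; Type~II vectors are supported on black-edge columns $\ket{e_b}$ and on $\ket{e^+}$-combinations at red edges touching the black cycle. These coordinate subspaces are disjoint. Within Type~I, distinct leaves contribute distinct $\ket{e_z}$ coordinates; within Type~II each vector owns a unique black-edge coordinate, since black cycles partition the black edges of $\tilde\cT$ and the vectors in a single cycle are indexed by that cycle's black edges.

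The main work lies in the dimension count. I will show $\mathrm{rank}(\tilde M)=|\tilde V|$ by a left-null-space argument: if $\sum_w\lambda_w\bra{w}\tilde M=0$, then for each red edge $(v,w)$ the edge column gives $\gamma(\lambda_v-\lambda_w)=0$ while the pseudo-edge column gives $\gamma(\lambda_v+\lambda_w)=0$, forcing $\lambda_v=\lambda_w=0$. Since every vertex of $\tilde\cT$ is incident to some red edge (the root via $(z_0,\hat v_0)$; each internal vertex of $\cT$ via its red child; each $z\in V^b_{\leaf}$ via $\tilde E_4$; each $\hat v$ via $\tilde E_3$), every $\lambda_w$ vanishes. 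Consequently
$$\dim\ker\tilde M=1+|V_{\leaf}|+|\tilde E^b|+2|\tilde E^r|-|\tilde V|,$$
and matching this against the $(1+|V_{\leaf}|)+|\tilde E^b|$ listed vectors reduces to the identity $|\tilde V|=2|\tilde E^r|$. Expanding both sides in terms of $|V^b_\internal|,|V^r_\internal|,|V^b_{\leaf}|,|V^r_{\leaf}|$ and using the binary-tree relation $|V_{\leaf}|=|V_\internal|$ (since $z_0$ has one child while every other internal vertex has two), this reduces further to $|V^r_\internal|=|V^b_{\leaf}|$, a bijection realized by the fact that every maximal black path of $\cT$ starts at a vertex in $V^r_\internal$ (the only internal vertices reached by a red edge) and terminates at a leaf in $V^b_{\leaf}$. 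I expect this last combinatorial bookkeeping, across the six edge classes $\tilde E_1,\ldots,\tilde E_6$ and the two kinds of auxiliary vertices, to be the main obstacle; once it is laid out carefully, the lemma follows.
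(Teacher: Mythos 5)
Your proposal is correct and follows essentially the same route as the paper's proof: verify membership in $\ker(\tilde M)$, check independence, and match the dimension count against $\mathrm{rank}(\tilde M)=|\tilde V|$. You usefully make explicit the combinatorial identity $|\tilde V|=2|\tilde E^r|$ that the paper treats as obvious, though your support-disjointness bookkeeping is slightly off in two minor spots (the root's Type I vector lives on $\ket{e^+_{z_0\hat v_0}}$, not an $\ket{e^-}$-component, and the boundary Type II vectors use $\ket{e^-_{v'\hat v_0}}$, not only $\ket{e^+}$-components) — the disjointness still holds precisely because Type I and Type II always pick complementary $\pm$-components at any shared red edge, so the argument goes through once this is stated carefully.
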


\begin{proof}
It is easy to see that all of these vectors belong to $\ker(\tilde M)$. Moreover, they are independent. Indeed, Type I vectors are orthogonal to each other and to Type II vectors. Also, Type II vectors associated to different black cycles are orthogonal. Then it suffices to verify that Type II vectors associated to a cycle are independent. Next, as in the case of $G=1$, considering columns of $\tilde M$ associated to red edges and red pseudo-edges, we find that $\tilde M$ is onto. Then by the rank-nullity theorem, the nullity of $\tilde M$ equals the number of columns of $\tilde M$ other than those associated with red edges and pseudo edges. These columns come from either the root and leaves, or black edges. Now observe that Type I vectors come from the root and leaves, and Type II vectors come from black edges. We conclude that Type I and Type II vectors span $\ker(\tilde M)$.
\end{proof}

As mentioned above all Type~I vectors are orthogonal to each other and to Type~II vectors. Also, Type~II vectors associated to different black cycles are orthogonal to each other. Nevertheless, as in the case of $G=1$, Type~II vectors for a single black cycle are not orthogonal to each other. Then to implement the reflection through $\ker\tilde M$ we need to construct an orthogonal basis for the subspace spanned by such vectors. This can be done by Fourier transform similarly to the case of $G=1$. 

\subsection{Proof of Theorem~\ref{thm:binaryClassical2quantumtime}}

We can now put all the ingredients together to finish the proof of Theorem~\ref{thm:binaryClassical2quantumtime}. As mentioned before the state space of our quantum algorithm is the space of columns of $\tilde M$. We use an encoding of the columns in such a way that implementation of subroutines $\cA_{\Local}$ and $\cA_{\BlackPath}$ is not costly, and use the same encoding to implement the quantum algorithm. Here, we should note that 
the subroutine $\cA_\Local$ in the statement of the theorem gives information about $\cT$, not $\tilde \cT$ which determines the state space of the algorithm.
Nevertheless, as $\tilde \cT$ is obtained from $\cT$ via simple moves, $\cA_{\Local}$ and $\cA_{\BlackPath}$ can be used to implement $\cA_{\Local}$ on $\tilde \cT$. The same holds for $\cA_{\BlackPath}$ on $\tilde \cT$ which gives information about black cycles as apposed to black paths. 

We can now describe implementations of each of the required reflections for the proof of Theorem~\ref{thm:binaryClassical2quantumtime}.  

\paragraph{Reflection through Type~I vectors:} 
For any $z\in V_\leaf$ we need to apply the reflection through $\frac{\sqrt{\wsize^+}}{\epsilon}\ket{e_z}+\frac{1}{\sqrt{W_\red} }\ket{{e_{z_pz}}^-}$. To this end, we use the subroutine $\cA_\Local$ to decide whether we are currently at a leaf of $\tilde{T}$ or not. If yes, we apply the required reflection similarly to~\eqref{eq:PHK-reflection} in the case of $G=1$. Reflection through $\frac{-\sqrt{\wsize^+}}{\epsilon}\ket{z_0}+\ket{{e_{z_0\hat{v}_0}}^+}$, where $z_0$ is the root, is implemented in the same way. Thus reflection through Type~I vectors can be implemented in time $O(C+\log n)$, where
$C=C_\Local+C_\BlackPath$.

\paragraph{Reflection through Type~II vectors:} 
Recall that Type~II vectors consist of a collection of vectors for any black cycle, and these collections are orthogonal for different cycles. Thus we can implement reflection through Type~II vectors by 
applying a \emph{controlled-reflection} along each of these collections. To this end, using the subroutine $\cA_{\BlackPath}$ we first compute the length of the cycle containing the current vertex. Then using $\cA_{\BlackPath}$ we compute the index of that vertex in the cycle. Next, using the same ideas as in 
Subsection~\ref{ssec:G=1}, in particular equation~\eqref{eq:type2ref}, we apply reflection through the collection of vectors in $\ker(\tilde M)$ associated to that cycle. Finally, we once again use $\cA_{\BlackPath}$ to map indices of the vertices in that cycle to their associated states. 
As the Fourier transform of order $t$ can be implemented in time $O(\log^2 t)$, we conclude that the refection through Type~II vectors can be implemented in time $O(C+\log^2 n)$.

\paragraph{Implementing $R_\Pi$:} To implement
$R_\Pi$ we  use $\cA_{\Local}$ to determine whether the current edge is always available or not. We also use $\cA_{\Local}$ to determine the query index associated to the current edge. Using a quantum query we check whether the current edge matches the answer to that query. If that edge is never available, or the answer to that query does not match the edge, we apply a $(-1)$ phase. Then we uncompute everything including the query answer. This gives an implementation of $R_\Pi$ using two quantum queries in time $O(C+\log n)$.

\begin{remark}
 It has been discussed in~\cite{BT20} that for some functions, we may have more speed-up if we assign different weights to different edges of the decision tree. In the proof of Theorem~\ref{thm:binaryClassical2quantumtime}, we assumed that all black edges have a common weight of $W_\black$. It is easy to see that this proof is also valid if we assign a single weight to all black edges in a black path.   
\end{remark}

\section{Proofs of Propositions~\ref{pro:BFS-based}-\ref{pro:bipartite} }\label{sec:applications-bfs}
Before stating the proofs, we show that the BFS algorithm,  a subroutine used, e.g., in all classical algorithms for Proposition~\ref{pro:BFS-based}, has quantum complexity $O(n^{3/2}\log^2 n)$ in the adjacency matrix model.

Algorithm~\ref{alg:BFS} describes the BFS algorithm.
Line~\ref{algline:BFS-addE_S} of the Algorithm~\ref{alg:BFS} determines the G-coloring of the decision tree. Any red edge of the decision tree is associated to an edge $(u,v)$ in the input graph, in which $L(v)=0$. 
Any black edge of the decision tree is associated to absence of an edge in the input graph or $L(v)=1$.

The state of the BFS algorithm consists of four lists $V$, $L$, $Q$, $E_S$. We need to encode these lists in such a way that implementation of the Local subroutine $\cA_{\Local}$ and the BlackPath subroutine $\cA_{\BlackPath}$ are not costly.
Thus we encode the state of this algorithm in
the following way:
\begin{itemize}
\item[$V$:] Using a list of $n$ elements and a pointer to the current position of $V$ that we are processing in the classical algorithm (the value of $i$ in the loop in Algorithm~\ref{alg:BFS}).
\item[$L$:] Using a list of $n$ numbers. All elements of this list are set to 0 in the beginning of the algorithm. During the algorithm whenever we remove the $i$-th element from $V$, we should set $L[i]=1$ and keep the list $V$ unchanged.
\item[$Q$:] Using a list of $n$ elements and two pointers $q_f$ and $q_l$ to its first and last elements.
\item[$E_S$:] Using a list  of $n$ elements and store a pointer to the last element that we added to this list. This pointer helps us adding new elements in cost $O(\log n)$.
\end{itemize}
 
Note that as long as we are going through black edges in the decision graph for BFS algorithm, the lists $L$, $Q$ and $E_S$ are unchanged;  the only changes we observe in the state of the BFS algorithm is in the pointers to the first element of the queue $Q$ and the pointer to the current element of $V$. Similarly for implementing other parts of subroutines $\cA_\Local$ and $\cA_{\BlackPath}$, it is easy to see that we only need to update the values of some of the pointers, or read the value of a list that its associated pointer refers to. Thus both of these subroutines are implementable in time $O(\log n)$.

In particular we determine that $\cA_\Local$ and $\cA_{\BlackPath}$ subroutines can be computed in time $O(\log n)$. For $\cA_\Local$ we compute the desired outputs in the following way:

\begin{itemize}
    \item The type of the given vertex is found by investigating the pointers, e.g., if all pointers are pointing to the first position of their associated lists then we are in the root.
    \item We can obtain the state of the parent and children of $v$ in logarithmic time by appropriate changes of the pointers and lists. 
    For determining the color of neighboring edges, note that a black edge corresponds to absence of an edge in the graph, for which we only update the pointer to the list $V$. A red edge on the other hand means that a vertex is added to the BFS tree, for which we update $Q$ and $N$ appropriately, and also update the pointer to $V$.
    \item The pointers to $Q$ and $V$ determine the index we query in any vertex. For determining the edge corresponding to a given query answer, if the answer is yes, then the edge color is red, otherwise it is black.
\end{itemize}
And for $\cA_{\BlackPath}$:
\begin{itemize}
    \item The \emph{length} of the black path containing the given vertex is $n-1$.
    \item Given $k$, the $k$-th vertex of the black path containing the given vertex can be computed by increasing the value of the pointer to $v$ by $k$.  
\end{itemize}

The last step is to show that $\cA_\PostProcess$ does not increase the cost of the algorithm from $O(n^{3/2}\log^2 n)$. The edge set of the BFS tree is stored in $E_S$, so we can output the result in time proportional to the size of this list, i.e., $O(n\log n)$, which is negligible compared to $O(n^{3/2}\log^2 n)$.

\begin{algorithm}[ht]
\caption{ BFS$(\cG)$: breadth first search algorithm on graph $\cG$}
\label{alg:BFS}
\begin{algorithmic}[1]
\State Let $V$ be the list of vertices of $\cG$ and $L$ be an array of size $n$, that determines the list of processed vertices and $Q$ be a first in first out queue. \label{algline-queue}
\State $L\leftarrow$ all $0$ array, $Q=\emptyset$, $E_\cS=\emptyset$ \Comment $E_\cS$ stores the edge set of the BFS tree.
		\State add $V(1)$ to $Q$
		\State $L(1)\leftarrow 1$
		\While{$Q \neq \emptyset$}
					\State $u\leftarrow {\rm dequeue}(Q)$
					\For{$i=1$ to $n$}
						\State Query $\big(u,V(i)\big)$
						\If{$\big(u,V(i)\big)\in E(\cG)$ and $L(i)=0$}\label{algline:BFS-addE_S}
							\State add $\big(u,V(i)\big)$ to $E_\cS$
							\State add $V(i)$ to $Q$  \label{algline:BFS-add2Q}
							\State $L(i)\leftarrow 1$		
						\EndIf
					\EndFor
		\EndWhile
\State \Return the BFS forest $\cS=\big(V,E_\cS\big)$
\end{algorithmic}
\end{algorithm}

\begin{proof}[Proof of Proposition~\ref{pro:BFS-based}]
To prove this proposition using  Theorem~\ref{thm:binaryClassical2quantumtime} it is enough to show that $\cA_\Local$ and $\cA_\BlackPath$ associated to the classical algorithms that have been used in the proof of Proposition 9 of~\cite{BT20} have time complexity $O(\log n)$. All of these algorithms are started by running the BFS algorithm.

In all classical algorithms for parts (i) through (vi), after the BFS algorithm according to the description of their algorithms, we search for a special edge in the input graph. This makes minor changes in the decision tree of the BFS algorithm without increasing the costs of $\cA_\Local$ and $\cA_\BlackPath$. We elaborate these minor changes for the problem of detecting bipartiteness. To decide whether $\cG$ is not bipartite we need to find an edge between two vertices whose heights in the BFS tree have the same parities; this reveals an odd-length-cycle. To this end, we need another length-$n$ list $H$, besides $V, L, Q$ and $E_S$ to save the parity of the heights of vertices in the BFS tree. During the BFS algorithm as we add vertices to the BFS tree we also set their heights from the height of their parents in the tree. After completing the BFS tree, we search for an edge between two vertices whose heights have the same parties by scanning the list $H$. These modifications does not increase the costs of the subroutines. This is the case for other parts of the proposition as well (see~\cite{BT20} for more details on these algorithms).

We also need to show that $A_\PostProcess$ costs less than $O(n^{3/2}\log^2 n)$, which is easy to confirm in all of the above cases.
Thus, using Theorem~\ref{thm:binaryClassical2quantumtime} the resulting time complexity would be $O(n^{3/2}\log^2 n)$ .
\end{proof}

\begin{proof}[Proof of Proposition~\ref{pro:DFS-based}]
Proof of this proposition is very similar to that of Proposition~\ref{pro:BFS-based}. The main difference is that the classical algorithm for these problems are based on the DFS algorithm.  The DFS algorithm is similar to the BFS algorithm, the only difference is that instead of the queue $Q$ which is a first-in first-out list, DFS algorithm uses a stack which is a last-in first-out list. Thus its time-efficient implementation is possible using the same ideas that we used for the BFS algorithm.
\end{proof}

\begin{proof}[Proof of Proposition~\ref{pro:BFS-list-based}]
Proof of this proposition is again similar to that of Proposition~\ref{pro:BFS-based}. The only difference is that using the same arguments we show that the time complexity of the BFS algorithm is $O(n\sqrt{m+n}\log^{5/2} n)$ in the adjacency list model.
\end{proof}

\begin{algorithm}[ht]
\caption{ Maximal Matching on a graph $\cG$}
\label{alg:maximal}
\begin{algorithmic}[1]
\State Let $V=\{v_1,\ldots,v_n\}$ be the list of vertices of $\cG$ and $M$ be an array of size $n$, that determines the list of vertices that have been added to the matching.
\State $M\leftarrow$ all $0$ array, $E_\cM=\emptyset$ \Comment $E_\cM$ stores the edge set of the maximal matching.
    \For{$i=1$ to $n$}
        \For{$j=1$ to $n$}
            \State Query $(v_i,v_j)$
            \If{$(v_i,v_j)\in E(\cG)$ and $M(i)=0$ and $M(j)=0$} \label{algline:maximal-if}
                \State add $(v_i,v_j)$ to $E_{\cM}$
                \State $M(i) \leftarrow 1$
                \State $M(j) \leftarrow 1$
            \EndIf
        \EndFor
    \EndFor    
\State \Return the maximal matching $E_\cM$
\end{algorithmic}
\end{algorithm}

\begin{proof}[Proof of Proposition~\ref{pro:bipartite}]
To prove this proposition using Theorem~\ref{thm:binaryClassical2quantumtime}, we use the classical algorithm of Hopcroft and Karp~\cite{HK73}. As of the BFS algorithm, we need to increase the height of the decision tree in order to be able to implement the traverse of any black path in the decision tree. This increases the query complexity of this algorithm compared to those of~\cite{BT20,LL16}. The rest of the proof is similar to Proposition~\ref{pro:BFS-based}.
\end{proof}

\begin{proof}[Proof of proposition~\ref{pro:maximal}]
In the matrix model, we use Algorithm~\ref{alg:maximal} as a classical algorithm for this problem. For the coloring of the associated decision tree, edges of the decision tree that satisfy the condition in Line~\ref{algline:maximal-if} of the algorithm take the red color, all other edges are black. For this decision tree $T=n^2$ and $G\leq \frac n 2$. Also it is easy to verify that $\cA_\Local$ and $\cA_\BlackPath$ have logarithmic costs and $\cA_\PostProcess$ costs less than $O(n^{3/2}\log^2 n)$. Thus, the time complexity of this algorithm is $O(n^{3/2}\log^2 n)$. The proof for the adjacency list model is similar.  
\end{proof}

\bibliography{references}
\bibliographystyle{alpha}


\appendix


\section{Proof of Theorem \ref{thm:QQAlg}}
The following auxiliary lemmas are used to prove Theorem~\ref{thm:QQAlg}.\label{app:QQAlg}

\begin{lemma}[Effective spectral Gap Lemma~\cite{LMRSS11}]\label{lem:eff-esp-gap}
Let $U=(2\Pi-I)(2\Lambda-I)$, where $\Lambda$ and $\Pi$ are two projections.  and let $P_\Theta$ be the orthogonal projection onto span$\{\ket{u}:U\ket{u}=e^{i\theta}\ket{u},|\theta|\leq \Theta\}$. Then, if $\Lambda\ket{u}=0$, we have $||P_\Theta \Pi\ket{u}||\leq \frac{\Theta}{2}||\ket{u}||.$
\end{lemma}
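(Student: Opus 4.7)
The plan is to leverage the hypothesis $\Lambda\ket{u}=0$ to produce a clean algebraic identity relating $\Pi\ket{u}$ to $(I-U)\ket{u}$, and then pass to the eigenbasis of $U$. Since $\Lambda\ket{u}=0$, the reflection satisfies $(2\Lambda-I)\ket{u}=-\ket{u}$, so
$$U\ket{u} = (2\Pi-I)(2\Lambda-I)\ket{u} = -(2\Pi-I)\ket{u} = \ket{u}-2\Pi\ket{u},$$
which rearranges to $\Pi\ket{u}=\tfrac12(I-U)\ket{u}$. This is the key identity driving the whole argument; it converts the projection $\Pi$ into a polynomial in $U$ applied to $\ket{u}$, which is exactly what makes the spectral information about $U$ relevant.

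Next, I would use that $U$ is a product of two reflections, hence unitary, so it admits an orthonormal eigenbasis $\{\ket{\psi_\theta}\}$ with $U\ket{\psi_\theta}=e^{i\theta}\ket{\psi_\theta}$. Expanding $\ket{u}=\sum_\theta c_\theta \ket{\psi_\theta}$ and substituting into the identity above gives
$$\Pi\ket{u}=\tfrac12\sum_\theta (1-e^{i\theta})\, c_\theta \ket{\psi_\theta}, \qquad P_\Theta\Pi\ket{u}=\tfrac12\sum_{|\theta|\le\Theta} (1-e^{i\theta})\, c_\theta \ket{\psi_\theta}.$$
Taking norms and using orthonormality of $\{\ket{\psi_\theta}\}$, I would then apply the elementary inequality $|1-e^{i\theta}|=2|\sin(\theta/2)|\le|\theta|\le\Theta$, valid on the restricted range, to obtain
$$\|P_\Theta\Pi\ket{u}\|^2 \le \tfrac{\Theta^2}{4}\sum_{|\theta|\le\Theta}|c_\theta|^2 \le \tfrac{\Theta^2}{4}\,\|u\|^2,$$
which is the claimed bound after taking square roots.

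I do not expect a serious obstacle here. The only subtle step is spotting the identity $\Pi\ket{u}=\tfrac12(I-U)\ket{u}$: at first glance one might expect to need a condition on $\Pi$ as well, but in fact the trailing reflection $(2\Pi-I)$ is absorbed cleanly into the definition of $U$, and only $\Lambda\ket{u}=0$ is used. After that, the proof is a short spectral calculation, and the restriction $|\theta|\le\Theta\le\pi$ keeps the inequality $|1-e^{i\theta}|\le|\theta|$ unambiguous, so there is no branch-cut issue to worry about.
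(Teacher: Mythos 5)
Your proof is correct. The paper states this lemma as a citation to~\cite{LMRSS11} and does not supply a proof, so there is no in-paper argument to compare against; judged on its own, your derivation is complete and tight. The identity $\Pi\ket{u}=\tfrac12(I-U)\ket{u}$ obtained from $\Lambda\ket{u}=0$ is exactly the right reduction, and expanding in the (orthonormal) eigenbasis of the unitary $U$, applying $P_\Theta$ to keep only components with $|\theta|\le\Theta$, and bounding $|1-e^{i\theta}|=2|\sin(\theta/2)|\le|\theta|$ gives the claimed $\tfrac{\Theta}{2}$ factor cleanly. One small remark: the inequality $|1-e^{i\theta}|\le|\theta|$ holds for every $\theta$ (since $|\sin x|\le|x|$ unconditionally), so the parenthetical restriction $\Theta\le\pi$ you mention at the end is not actually needed; with the usual convention $\theta\in(-\pi,\pi]$ the bound is valid for all $\Theta\ge 0$. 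This is the standard direct spectral argument for the effective spectral gap lemma; some expositions instead route through Jordan's $2\times 2$ block decomposition of a pair of projections, but for this particular statement your linear-algebraic shortcut is both shorter and fully rigorous.
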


\begin{lemma}[Phase detection~\cite{kit95}]\label{thm:phase-det}
For any unitary $U\in \mathcal{L}(\mathcal{H})$ and $\Theta,\delta>0$, there exists $b=O(\log{1/\Theta}\log{1/\delta})$ and a quantum circuit $R(U)$ on $\mathcal{H}\otimes(C^2)^{\otimes b}$ that is constructed uniformly in $\Theta , \delta$ independent of $U$. $R(U)$ makes at most $O(\frac{\log1/\delta}{\Theta})$ controlled calls to $U$ and $U^{-1}$ and for any eigenstate $(\ket{\beta},e^{i\theta})$ of $U$ with $\theta\in(\-\pi,\pi]$,
\begin{itemize}
\item if $\theta=0$, then $R(U)\ket{\beta}\ket{0^b}=\ket{\beta}\ket{0^b}$ and
\item if $|\theta|>\Theta$, then $R(U)\ket{\beta}\ket{0^b}=-\ket{\beta}\left(\ket{0^b}+\ket{\delta_\beta}\right)$, for some $\ket{\delta_\beta}$ having $\|\ket{\delta_\beta}\|<\delta$. Therefore for every $\ket{\nu}\in \mathcal{H}$ that is orthogonal to all eigenvectors of $U$ having eigenvalue $e^{i\theta}$ for $\theta\leq \Theta$, we have $\|(R(U)+I)\ket{\nu}\ket{0^b}\|<\delta.$
\end{itemize} 
\end{lemma}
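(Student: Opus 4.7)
My plan is to build $R(U)$ in the standard ``phase estimation plus ancilla reflection'' way due to Kitaev, and then verify the two cases of the lemma by a short calculation. I proceed in three stages.

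First, I construct a basic QPE circuit $P$ on $b_0 := \lceil \log_2(1/\Theta)\rceil + O(1)$ ancilla qubits: Hadamards prepare the uniform superposition on the ancillas, the controlled powers $\text{C-}U^{2^j}$ for $0\le j<b_0$ are applied (using $O(1/\Theta)$ controlled invocations of $U$ in total), and the inverse QFT then acts on the ancillas. I verify two facts. (i)~If $U\ket{\beta}=\ket{\beta}$, then $P\bigl(\ket{\beta}\ket{0^{b_0}}\bigr)=\ket{\beta}\ket{0^{b_0}}$ exactly, because every controlled power contributes phase $1$, the ancillas stay uniform, and the inverse QFT returns them to $\ket{0^{b_0}}$. (ii)~If $U\ket{\beta}=e^{i\theta}\ket{\beta}$ with $|\theta|>\Theta$, then the probability of measuring the ancillas as $\ket{0^{b_0}}$ is at most some absolute constant $c<1$, provided $b_0$ is chosen so that $2^{b_0}\Theta$ exceeds a fixed constant; this is the Fejer-kernel estimate applied to $\bigl|\bra{0^{b_0}}\mathrm{QFT}^{-1}\bigl(2^{-b_0/2}\sum_{\ell=0}^{2^{b_0}-1}e^{i\ell\theta}\ket{\ell}\bigr)\bigr|^2$.

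Second, I amplify by running $k:=O(\log(1/\delta))$ independent copies of $P$ on $k$ disjoint ancilla registers of $b_0$ qubits each, obtaining the circuit $P^{(k)}$ with $b:=kb_0=O(\log(1/\Theta)\log(1/\delta))$ total ancillas and $O(\log(1/\delta)/\Theta)$ controlled calls to $U$ and $U^{-1}$. Property~(i) is preserved verbatim: $P^{(k)}\ket{\beta}\ket{0^b}=\ket{\beta}\ket{0^b}$. For~(ii), since the copies share only the eigenstate $\ket{\beta}$ (which each leaves invariant) and otherwise act on disjoint ancillas, the single-copy output state tensors across copies, so writing
\begin{equation*}
P^{(k)}\ket{\beta}\ket{0^b}=a_\beta\ket{\beta}\ket{0^b}+\ket{\beta}\ket{\phi_\beta^\perp},\qquad \ket{\phi_\beta^\perp}\perp\ket{0^b},
\end{equation*}
the product structure forces $|a_\beta|\le c^{k/2}$, which we drive below $\delta/2$ by choosing $k$ appropriately. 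I then define
\begin{equation*}
R(U):=P^{(k)\dagger}\bigl(V\otimes I_{\mathcal{H}}\bigr)P^{(k)},\qquad V:=2\ket{0^b}\bra{0^b}-I,
\end{equation*}
i.e.\ run QPE, reflect the ancilla about $\ket{0^b}$, and uncompute.

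Third, I verify the two conclusions. In case~(i), the ancilla stays at $\ket{0^b}$ throughout, $V$ acts as the identity on it, and $R(U)\ket{\beta}\ket{0^b}=\ket{\beta}\ket{0^b}$. In case~(ii), using $V\ket{0^b}=\ket{0^b}$ and $V\ket{\phi_\beta^\perp}=-\ket{\phi_\beta^\perp}$, a direct computation gives
\begin{equation*}
R(U)\ket{\beta}\ket{0^b}=-\ket{\beta}\ket{0^b}+2a_\beta P^{(k)\dagger}\ket{\beta}\ket{0^b}=-\ket{\beta}\bigl(\ket{0^b}+\ket{\delta_\beta}\bigr),
\end{equation*}
where $\ket{\delta_\beta}$ has norm $2|a_\beta|<\delta$. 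The ``Therefore'' clause for a general $\ket{\nu}$ orthogonal to the small-phase eigenspace follows by expanding $\ket{\nu}$ in the eigenbasis of $U$, applying the above to each component, and using orthogonality of the $\ket{\beta}$'s to sum the squared norms.

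The main obstacle I anticipate is property~(ii) in step one: showing that the single-copy probability of a ``false zero'' is bounded by an absolute constant $c<1$ \emph{uniformly} in $\theta$ for $|\theta|>\Theta$, rather than only by a $\Theta$-dependent amount. The explicit Fejer-kernel computation is standard but needs to be carried out carefully to extract a dimension-independent bound; once that is in place, the factorization across the $k$ copies then delivers the exponential-in-$k$ amplification for free.
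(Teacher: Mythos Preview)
The paper does not prove this lemma; it is stated with a citation to Kitaev and used as a black box in Appendix~A. Your construction---phase estimation on $O(\log(1/\Theta))$ ancillas, $O(\log(1/\delta))$-fold independent repetition to drive the false-zero amplitude below $\delta/2$, then conjugating the ancilla reflection $2\ket{0^b}\bra{0^b}-I$ by the (un)computation---is the standard one and is correct.

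Two minor remarks. First, the Fejer-kernel bound you flag is routine once $2^{b_0}\Theta$ is pinned above an absolute constant: on the main lobe $K_N(\theta)=\bigl|\tfrac1N\sum_\ell e^{i\ell\theta}\bigr|^2$ is monotone in $|\theta|$ and its value at $\theta=\Theta$ is essentially $\mathrm{sinc}^2(2^{b_0}\Theta/2)$, while beyond the main lobe the envelope $1/(N\sin(\theta/2))^2$ already gives a constant strictly below $1$; so the uniform bound $c<1$ over $|\theta|\in(\Theta,\pi]$ follows. The tensoring of the $k$ copies is legitimate because each copy only applies controlled powers of $U$ to the shared register, which for an eigenstate $\ket{\beta}$ merely deposits phases on that copy's private ancilla. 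Second, in the ``Therefore'' clause your eigenbasis expansion gives $\|(R(U)+I)\ket{\nu}\ket{0^b}\|<\delta\,\|\ket{\nu}\|$; the lemma statement suppresses the factor $\|\ket{\nu}\|$, and indeed in the paper's application the vector has norm at most $1$. (Also, with the register ordering $\mathcal H\otimes(\mathbb C^2)^{\otimes b}$ the reflection should be written $I_{\mathcal H}\otimes V$ rather than $V\otimes I_{\mathcal H}$, but this is purely notational.)
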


Now we present the proof of Theorem~\ref{thm:QQAlg}. Let $U=(2\Pi_x-I)(2\Lambda-I)$ and  
$\ket{\phi_{x\pm}}=\frac{1}{\sqrt 2}\big(\ket{0} \pm\ket{f(x)}\big)$. Then we have 
\begin{align*}
 \ket{0}&=\frac{1}{\sqrt 2}\big(\ket{\phi_{x+}}+\ket{\phi_{x-}}\big), \\
 \ket{f(x)}&=\frac{1}{\sqrt 2} \big(\ket{\phi_{x+}}-\ket{\phi_{x-}}\big).
\end{align*}
The intuition behind our algorithm is that the vector $\ket{\phi_{x+}}$ has a large overlap with eigenvalue-1 eigenvector of $U$ (see~(i) below), and the vector $\ket{\phi_{x-}}$ has a small overlap with eigenvectors of $U$ having small angles (see~(ii) below). So we would like to construct an algorithm in such a way that starting from the state $\ket{0}$, index of the first column of $M$, it reflects the term $\ket{\phi_{x-}}$ using the phase detection procedure and outputs $\ket{f(x)}$.

First we need to show that:
\begin{itemize}
\item[(i)] $||P_0\ket{\phi_{x+}} ||^2\geq 1-\epsilon^2,$
\item[(ii)] $\forall \Theta\geq 0:||P_\Theta\ket{\phi_{x-}} ||^2\leq \frac{\Theta^2}{4}\left(\frac{W^2}{\epsilon^2}+1 \right),$
\end{itemize}
where $P_\Theta$ is the orthogonal projection onto the span of $\{\ket{u}:U\ket{u}=e^{i\theta}\ket{u},|\theta|\leq \Theta\}$.

\paragraph{Proof of (i).} Let 
$$\ket{\psi_x}=\ket{\phi_{x+}}-\frac{\epsilon}{\sqrt{\wsize^+}}\ket{w_x},$$
where $\ket{w_x}$ is the positive witness for $x$.
Then we have
$\Pi_x\ket{\psi_x}=\ket{\psi_x}$.  Also
$\Lambda\ket{\psi_x}=\ket{\psi_x}$ since 
\begin{align*}
M\ket{\psi_x}&=\frac{A'}{\sqrt 2}\big(\ket{0}+\ket{f(x)}\big)-\frac{\epsilon}{\sqrt{\wsize^+}}A\ket{w_x}
\\ &=   \frac{\epsilon}{\sqrt{\wsize^+}}\big(\ket{z_0}-\ket{z_{f(x)}}\big)-\frac{\epsilon}{\sqrt{\wsize^+}}A\ket{w_x}
\\&=0
\end{align*}
Therefore  $U\ket{\psi_x}=\ket{\psi_x}$. We also have $|\bra{\phi_{x+}}\psi_x\rangle |^2 / \|  \ket{\psi_{x}}\|^2 \geq 1-\epsilon^2$. These give (i).

\paragraph{Proof of (ii).} Let
\begin{equation*}
\ket{\bar{\psi}_x}=\ket{\phi_{x-}}+\frac{{\sqrt{\wsize^+}}}{2\epsilon}A^\dagger \ket{\bar{w}_x},
\end{equation*}
where $\ket{\bar{w}_x}$ is the negative witness.
Then we have $\Pi_x \ket{\bar{\psi}_x}=\ket{\phi_{x-}}$.

We claim that for every $\ket{\tilde{v}}=\begin{bmatrix}
\ket{v'} \\ 
\ket{v}
\end{bmatrix} $ such that $M\ket{\tilde{v}}=0$ we have $\bra{\bar{\psi}_x}\tilde{v}\rangle=0$. This gives $\Lambda \ket{\bar{\psi}_x}=0$. To prove our claim note that if
$M\ket{\tilde{v}}=A'\ket{v'}+A\ket{v}=0,$ then
$$A\ket{v}=-A'\ket{v'}=-\frac{\sqrt 2 \epsilon}{\sqrt{\wsize^+}}\big( v'_0 \ket{z_0}-\sum_{\alpha=1}^m v'_\alpha\ket{u_\alpha}\big).$$ 
Therefore, using our assumption on negative witnesses we have
\begin{align*}
\bra{\bar{\psi}_x}\tilde{v}\big\rangle &=\frac{1}{\sqrt 2}\big(\bra{0}-\bra{f(x)}\big) \ket{v'}+\frac{\sqrt{\wsize^+}}{2\epsilon}\bra{\bar{w}_x} A\ket{v}\\ &=
v'_0-v'_{f(x)}-\big( v'_0\bra{\bar{w}_x}z_0\rangle - \sum_{\alpha=1}^m v'_\alpha \bra{\bar{w}_x}z_\alpha\rangle\big)
\\&=0
.\end{align*}

Now using Lemma~\ref{lem:eff-esp-gap} and the fact that the negative complexity is bounded by $\wsize^-$ we have 
\begin{equation*}
\|P_{\Theta} \ket{\phi_{x-}}\|^2\leq  \frac{\Theta^2}{4} \Big\| \ket{\bar{\psi}_x}\Big\|^2=
\frac{\Theta^2}{4}\left(1+\frac{\wsize^+\wsize^-}{4\epsilon^2}\right)=\frac{\Theta^2}{4}\left(1+\frac{W^2}{4\epsilon^2}\right).
\end{equation*}
This gives (ii).

\medskip
We are now ready to state the algorithm: start from state $\ket{0}$, the index of the first column of $M$; apply the phase detection circuit of Theorem~\ref{thm:phase-det} for $U=(2\Pi_x-I)(2\Lambda-I)$ with parameters $\Theta=\frac{\epsilon^2}{W}$ and $\delta=\epsilon$; measure in the computational basis and output the result.

To complete the proof we need to show that the error of this algorithm is bounded. To this end, we show that if $W>\epsilon$ then
\begin{equation*}
\| R(U)\ket{0}\ket{0^b}-\ket{f(x)}\ket{0^b}\|<4\epsilon.
\end{equation*}
Let $\bar{P_\theta}=I-P_\theta$. Then by  Theorem~\ref{thm:phase-det} and items (i) and (ii) above
we have 
\begin{align*}
& \Big\| R(U)\ket{0}\ket{0^b}-\ket{f(x)}\ket{0^b}\Big\|\\
 &=\frac{1}{\sqrt{2}} \Big\| R(U_x)(\ket{\phi_{x+}}+\ket{\phi_{x-}})\ket{0^b}-(\ket{\phi_{x+}}-\ket{\phi_{x-}})\ket{0^b}\Big\| \\ 
 & \leq  \frac{1}{\sqrt{2}}\Big \| (R(U_x)-1)\ket{\phi_{x+}}\ket{0^b}\Big\|+\frac{1}{\sqrt{2}}\Big\|(R(U_x)+1)\ket{\phi_{x-}}\ket{0^b}\Big\| \\
 & \leq 
\frac{1}{\sqrt{2}} \Big\| (R(U_x)-1)\bar{P}_0\ket{\phi_{x+}}\ket{0^b}\Big\|+\frac{1}{\sqrt{2}}\Big\| (R(U_x)+1)\bar{P}_\Theta\ket{\phi_{x-}}\ket{0^b}\Big\|+\sqrt{2}\big\|P_\Theta\ket{\phi_{x-}}\big\| \\ 
& \leq
\sqrt{2}\big\|\bar{P}_0\ket{\phi_{x+}}\big\|+ \frac{1}{\sqrt{2}} \Big\| (R(U_x)+1)\bar{P}_\Theta\ket{\phi_{x-}}\ket{0^b}\Big\|+\sqrt{2}\big\|P_\Theta\ket{\phi_{x-}}\big\| \\ 
& < 
\sqrt{2}\epsilon+\frac{\delta}{\sqrt{2}}+\frac{\Theta}{\sqrt{2}}\sqrt{\frac{W^2}{4\epsilon^2}+1} \\ 
& = \sqrt 2 \epsilon+\frac{\epsilon}{\sqrt 2}+ \frac{\epsilon^2}{\sqrt 2 W}\sqrt{\frac{W^2}{4\epsilon^2}+1} \\
& \leq \sqrt 2 \epsilon+\frac{\epsilon}{\sqrt 2}+ \epsilon \\
& < 4\epsilon.
\end{align*}
We are done.

\section{Span program for non-binary trees}\label{app:nonbinary}

In this appendix we show that given a non-binary decision tree $\cT$ of depth $T$ together with a coloring of its edges,  we can convert it to a non-binary span program (NBSP) with complexity $O(\sqrt{GT})$, where $G$ is the maximum number of red edges from the root to leaves of $\cT$.

We start by defining a non-binary decision tree for functions with non-binary input from~\cite{BT20}.
\begin{definition}[Generalized decision tree and G-coloring]
A generalized decision tree $\cT$ is a rooted directed tree such that  each internal vertex $v$ (including the root) of $\cT$ corresponds to a query index $1\leq J(v)\leq n$. Outgoing edges of $v$ are labeled by subsets of $[\ell]$ that form a partition of $[\ell]$. 
Leaves of $\cT$ are labeled with elements  of $[m]$. 
We say that $\cT$ decides a function $f: D_f\to [m]$ with $D_f\subseteq [\ell]^n$ if for every $x\in D_f$, by starting from the root of $\cT$ and following edges labeled by $Q_v(x_{J(v)})$ we reach a leaf with label $m=f(x)$.
As in the case of binary functions, a G-coloring of a generalized decision tree $\cT$ is a coloring of its edges by two colors black and red, in such a way that any vertex of $\cT$ has exactly one outgoing edge with black color.
\label{def:gen-dec-tree}
\end{definition}
We emphasize that in a generalized decision tree, some query outcomes may be grouped into a single edge if decisions made by the algorithm after this query are independent of which particular query outcome within the group occurred. Therefore labels of edges are subsets of $[\ell]$. See~\cite{BT20} for more details on this. 

We now construct a NBSP based on a generalized decision tree $\cT$. 
We note that, as in the binary case, instead of $f$ it suffices to construct an NBSP associated to the function $\tilde f$ that outputs a leaf of $\cT$ corresponding to each $x\in D_f$. 

\begin{itemize}
\item The vector space $\cV$ is $| {V}|$-dimensional with the orthonormal basis $\{\ket{v}\,:\, v \in {V}\}$, where $V$ 
is the vertex set of $ \cT$.
\item To define the vector space $H$ we first define an auxiliary space $\widehat{H}$ with the orthonormal basis 
\begin{equation*}
    \big\{ \ket{v, \black},\, \ket{v, \red}, \ket{v^\#} :\, v \in  V_\internal \big\}\cup
    \big\{\ket{v}\,:\, v \in V\big\}.
\end{equation*}
where $V_\internal$ is the set of internal vertices of $\cT$ which correspond to query indices.

Then let the input vector space $H_{j,q}\subseteq \widehat{H}$ be
\[
H_{j,q} = {\rm span} \Big\{ \ket{v, C(v, q)} - \ket{ N(v,q)} \,:\, v\in V_\internal,\, J(v) = j \Big\}, \]
where by $N(v,q)$ we mean the child of vertex $v$ in $\cT$ whose label \emph{contains} $q\in [\ell]$, and $C(v, q)$ is the color of the associated edge.
Then, let $H_j = H_{j, 0}+\cdots+ H_{j, (\ell-1)}.$

\item   $H_\free={\rm span}\left\{\ket{v^\#}: v\in {V}_{\internal}   \right\}$.

\item  $H_\forbid={\rm span}\big\{\ket{v,\red}: v\in {V}_\internal \big\}$.

\item Then let $H = H_1+\cdots +H_n + H_{\rm free}+ H_{\rm forbid}$ 
that is a subspace of $\widehat{H}$. Indeed, $H$ equals the span of the following vectors:

\begin{itemize}
   \item[-] $\ket{v, \black}-\ket{b_v} $ for $  v\in {V}_{\internal}$ where $b_v$ is the \emph{unique black child} of $v$
   \item[-] $\ket{v, \red}$ for $v\in {V}_{\internal}$
   \item[-]$\ket v$ for  $v\in V^r$
   \item[-] $\ket{v^{\#}}$ for $v\in {V}_{\internal}$
\end{itemize}
Here, by $V^r$ we mean the set of vertices of $\cT$ that have a \emph{red parent edge}.

\item The target vectors are indexed by leaves $z\in V_\leaf$ of the decision tree: 
\[\ket{t_z}=\ket{v_0}-\ket{z},\]
where $v_0$ is the root of $\cT$.

\item Define the operator $\widehat A: \widehat H \to \cV$  by 
\begin{align*}
&\widehat{A}\ket{v, c} = \sqrt{W_c}  \sum_{u \in  N(v)} \ket{u} & \forall v\in {V}_\internal, \; c \in \{\red, \black\}
\\
&\widehat{A}\ket{v} = \sqrt{W_{\black}} \ket{v} &\quad  \forall v\in {V}^b \\
&\widehat{A}\ket{v} = \sqrt{W_{\red}} \ket{v} &\quad  \forall v\in {V}^r \\
&\widehat{A}\ket{v^\#} = \ket{v} - \sum_{u \in  N(v)} \ket{u} &\quad \forall v\in {V}_\internal.
\end{align*}
Here $ N(v)$ is the set of children of $v$, $W_\black, W_\red$ are two parameters to be determined, and $V^b$ is the set of vertices of $\cT$ whose parent edge is black.
Then let $A := \widehat{A}\big|_H: H\to \cV$.
\end{itemize}

Now we show that this is a valid span program and compute its complexity. For any $x\in D_f$ let $z_{\tilde f(x)}$ be the leaf associated to $\tilde f(x)$, and let $ P(x)$ be the path from the root $z_0$ to the leaf $z_{\tilde f(x)}$. Then the positive witness $\ket{w_x}\in H(x)$ is

\begin{align*}
\ket{w_x}=&\sum_{v\in P(x)\cap V_\internal} \ket{v^\#}+\frac{1}{\sqrt{W_{C(v,x_{J(v)})}}} \Big( \ket{v, C(v, x_{J(v)})} - \ket{N(v,x_{J(v)})}\Big).
\end{align*}
Thus the positive complexity is bounded by
\begin{equation*}
   \wsize^+ \leq T +  \frac{2}{ W_\black}T+ \frac{2}{W_\red}G . 
\end{equation*}

\medskip
The negative witness equals $\ket{\bar{w}_x}=\sum_{v \in P(x)} \ket{v}\in \cV$. We note that $\bra{\bar{w}_x} t_\alpha\rangle=1-\delta_{f(x), \alpha}$ for any $\alpha$. Moreover, it is not hard to verify that $\ket{\bar{w}_x}$ is orthogonal to $AH(x)$. Thus $\ket{\bar{w}_x}$ is a valid negative witness. Next, we compute the negative complexity. 
For a vertex $v\in V_{\internal}$ let
\[H_v := {\rm span}\Big\{\frac{1}{\sqrt 2}(\ket{v,\black}-\ket{b_v)}, \ket{v, \red}\Big\}\cup \big\{  \ket{u} \,:\, u \in N(v) \cap V^r\big\} .\]
These subspaces are orthogonal to each other and to $H_{\free}$. Indeed, we have
$$H = \bigoplus_{v \in V_\internal} H_v \oplus H_{\rm free}.$$
Note that as a valid negative witness, $\ket{\bar w_x}$ is orthogonal to $AH_\free$. 
Thus, we have
\begin{align*}
\norm{ \bra{\bar{w}_x}A}^2 & =  \sum_{v \in V_\internal} \norm{\bra{\bar{w}_x}A\big|_{H_v}}^2. 
\end{align*}
Next, we note that if $v \notin P(x)$, then $\bra{\bar{w}_x}A\big|_{H_v} = 0$.
Otherwise, if $v \in P(x)$, there are two cases:
\begin{enumerate}
\item If $C(v, x_{J(v)}) = \black$, then
\begin{align*}
\norm{\bra{\bar{w}_x}A\big|_{H_v}}^2 & = \frac{1}{2} \Big|{\bra{\bar{w}_x}A \Big(\ket{v, \black} - \ket{ N(v,{x_{J(v)}})}\Big)}\Big|^2 \\& \quad +
\Big|{\bra{\bar{w}_x}A\ket{v, \red}}\Big|^2  +
\sum_{u\in N(V)\cap V^r} \Big|{\bra{\bar{w}_x}A\ket{u}}\Big|^2 \\
& = 0 + W_\red + 0.
\end{align*}
\item If $C(v, x_{J(v)}) = \red $, then
\begin{align*}
\norm{\bra{\bar{w}_x}A\big|_{H_v}}^2 & =
\frac 12 \Big|{\bra{\bar{w}_x}A\Big(\ket{v, \black} - \ket{b_v}\Big)}\Big|^2
\\ & \quad + \Big|{\bra{\bar{w}_x}A\ket{v, \red}}\Big|^2 +
\sum_{u \in N(V)\cap V^r}
\Big|{\bra{\bar{w}_x}A\ket{u}}\Big|^2 
\\ & =
\frac 12 W_\black + W_\red + W_\red.
\end{align*}
\end{enumerate}
Putting these together, we obtain
$$\wsize^-_x \le W_\red T  + (W_\black/2 + 2W_\red) G.$$ 

Now letting $W_\black=\frac{1}{G}$ and $W_\red=\frac{1}{T}$ and using $G\leq T$, we get $\wsize^+=O(GT)$ and $\wsize^-=O(1)$. This means that the total complexity is $O(\sqrt{GT})$ as desired.

\section{Proof of Theorem~\ref{thm:binaryClassical2quantumtime} for generalized non-binary decision trees}\label{app:Non-binary_Implement}

In this appendix we give the proof of Theorem~\ref{thm:binaryClassical2quantumtime} in its most general form: for $\ell>2$ and for generalized non-binary decision trees. 
The main difference of a decision tree and a non-binary decision tree is the degree of its internal vertices. We refer to Appendix~\ref{app:nonbinary} for a detailed definition of generalized non-binary decision trees.



\begin{figure}
\centering \includegraphics[scale=.7]{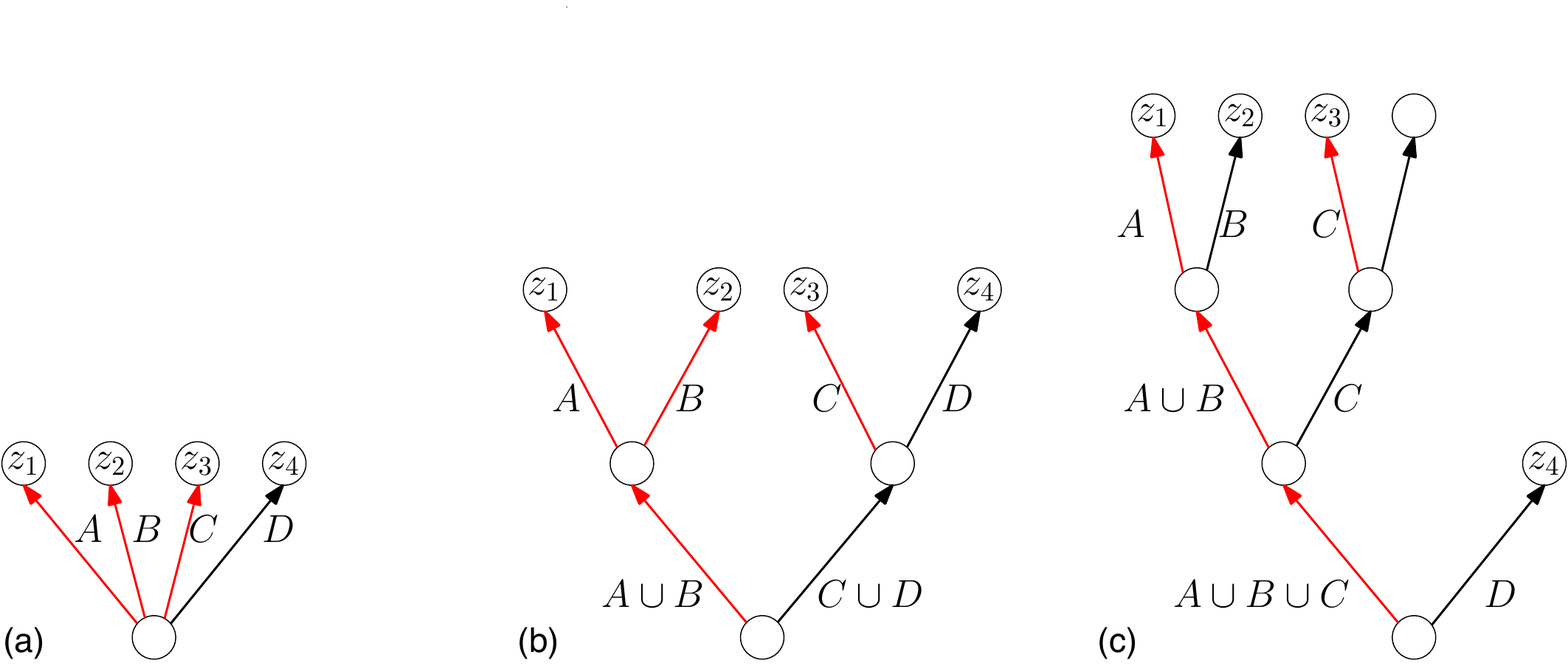}
\figcaption{  Converting a  non-binary decision tree to a binary one. A vertex of the decision tree with four children (a), can be  replaced with a depth-2 subtree (b). This approach although works increases the depth of the tree by a $\log(\ell)$ factor. In (c) we keep the black edge of (a), i.e., the edge with label $D$, untouched while replacing the other three children with a binary tree.  This process changes the depth of the tree from $T$ to $O\big(T+\log(\ell)G\big)$.
 }\label{fig:nbtobcorrect}
\end{figure}

If we use a span program with the same structure as what we defined in Theorem~\ref{thm:Classical2quantum} for a function $f:D_f\to [m]$ with $D_f\in [\ell]^n$, we get a $O(\sqrt{\ell-1})$ blow up in the complexity of the span program. An alternating solution is to use the non-binary span program that we defined in Appendix~\ref{app:nonbinary}, and to convert it to a quantum algorithm and try to implement it time-efficiently. It turns out that the construction of the kernel of the matrix $M$ for the NBSP of Appendix~\ref{app:nonbinary} is quite complicated, and we do not know how to implement the reflection through its kernel time-efficiently (even after modifying the decision tree).
Thus, our strategy to handle the case of $\ell>2$ is to turn the non-binary decision tree to a binary one at the cost of a $\sqrt{\log(\ell)}$ blow up in the query complexity. 

One may suggest that we can encode every symbol of  the input using $\log(\ell)$ binary bits and treat the function as it has a binary input (see Figure~\ref{fig:nbtobcorrect}b). Although this approach does work, it scales the query and time complexities by $\log(\ell)$.
Here, to get an improved scaling
we take a different approach. When converting a non-binary decision tree to a binary one, for any internal vertex $v$, we keep its outgoing black edge untouched, yet we replace the outgoing red edges by a binary  tree of depth $\log \ell$ (see Figure~\ref{fig:nbtobcorrect}c). This construction converts any red edge into a path of length $\log(\ell)$ in the resulting binary tree. 
This decision tree has depth $T'=T+G\log(\ell)$. Moreover, there are at most $G'=G\log(\ell)$ red edges in any path from its root to any of its leaves. So the query complexity of the resulting algorithm is $O(\sqrt{G'T'})=O(\sqrt{\log(\ell)GT+\log^2(\ell)G^2})$.
Therefore, assuming that $\log(\ell)G=O(T)$, that is usually the case in applications, we only get a $\sqrt{\log \ell}$ blow up in its query complexity. 
Then using Theorem~\ref{thm:binaryClassical2quantumtime} in the binary case, we obtain the result for arbitrary $\ell$.
Here, we should also mention that modifying the decision tree as above, imposes modifications in the subroutines $\cA_\Local$ and $\cA_\BlackPath$. These modifications are \emph{local} and prescribed as illustrated in the example of Figure~\ref{fig:nbtobcorrect}. Moreover, they increase the time complexity of the subroutines $\cA_\Local$ and $\cA_\BlackPath$ associated to the modified tree, by a constant additive term. This is because, e.g., for $\cA_{\Local}$, given any vertex of the modified tree, whose address is assumed to be given in terms of its parent vertex in the original tree, we can query $\cA_{\Local}$ of the original tree to find the children of its parent vertex in the original tree, and then compute the children in the modified tree according to Figure~\ref{fig:nbtobcorrect}. The subroutine $\cA_{\BlackPath}$ is also updated for the modified tree similarly.

\end{document}